\documentclass[11pt]{article}      %STOC requirement http://acm-stoc.org/stoc2023/cfp.html
\usepackage[T1]{fontenc}
\usepackage[margin=1in]{geometry}  %STOC requirement http://acm-stoc.org/stoc2023/cfp.html
\usepackage{libertine} % font
\usepackage{inconsolata} % texttt font
\usepackage{amsthm}
\usepackage{amsmath}
\usepackage{amssymb}
\usepackage{mathrsfs}
\usepackage{thm-restate}
\usepackage[unicode,colorlinks=true,
            linkcolor=blue,
            citecolor=blue,
            urlcolor=blue]{hyperref}
\usepackage[capitalize,nameinlink]{cleveref}
\usepackage{url}
\usepackage[size=normalsize]{todonotes}
\usepackage{enumerate}
\usepackage[all,defaultlines=3]{nowidow}
\usepackage[mathlines]{lineno}
\usepackage{xcolor}
\usepackage{multicol}
\usepackage{tabularx}
\usepackage{colortbl}
\usepackage{float}
\usepackage{caption}
\usepackage{tikz}

\usepackage{enumitem}
\setlist[itemize]{topsep=4pt,itemsep=3pt,parsep=0pt} 
\setlist[enumerate]{topsep=4pt,itemsep=3pt,parsep=0pt}
\setlist[description]{topsep=4pt,itemsep=3pt,parsep=0pt}

\newtheorem{theorem}{Theorem}
\newtheorem{corollary}{Corollary}

\newtheorem{lemma}{Lemma}
\newtheorem{example}{Example}
\newtheorem{claim}{Claim}

\theoremstyle{definition}

\theoremstyle{plain}

\newenvironment{claimproof}[1][\proofname]{%
  \begin{proof}[#1]%
}{%
  \end{proof}%
}

\def\phi{\varphi}
\def\epsilon{\varepsilon}
\renewcommand{\emptyset}{\varnothing}

\newcommand{\N}{\mathbb{N}}
\def\CC{\mathscr{C}}
\newcommand{\Cc}{\mathscr{C}}
\newcommand{\Oof}{\mathcal{O}}
\newcommand{\PP}{\mathcal P}
\renewcommand{\SS}{\mathcal{S}}
\newcommand{\Gyarfas}{\textnormal{Gyárfás}}
\newcommand{\GyarfasSub}{\textnormal{GyárfásSub}}

\newcommand{\funding}{NM received funding from the European Research Council (ERC) with grant agreement No.\ 101126229 -- {\sc buka}.
}

\begin{document}
\title{The Parameterized Complexity of Independent Set and More\\ when Excluding a Half-Graph, Co-Matching, or Matching\thanks{\funding}}
\date{}
\author{
  Jan Dreier
  \\
  \small{TU Wien}
  \\
  \small{\texttt{dreier@ac.tuwien.ac.at}}
  \and
  Nikolas M\"ahlmann
  \\
  \small{University of Warsaw}
  \\
  \small{\texttt{maehlmann@mimuw.edu.pl}}
  \and
  Sebastian Siebertz
  \\
  \small{University of Bremen}
  \\
  \small{\texttt{siebertz@uni-bremen.de}}
}

\maketitle

\begin{abstract}
A theorem of Ding, Oporowski, Oxley, and Vertigan implies that any sufficiently large twin-free graph contains a large matching, a co-matching, or a half-graph as a semi-induced subgraph.
The sizes of these unavoidable patterns are measured by the \emph{matching index}, \emph{co-matching index}, and \emph{half-graph index} of a graph.
Consequently, graph classes can be organized into the eight classes determined by which of the three indices are bounded.

We completely classify the parameterized complexity of \textsc{Independent Set}, \textsc{Clique}, and \textsc{Dominating Set} across all eight of these classes.
For this purpose, we first derive multiple tractability and hardness results from the existing literature, and then proceed to fill the identified gaps. 
Among our novel results, 
we show that \textsc{Independent Set} is fixed-parameter
tractable on every graph class where the half-graph and co-matching indices are simultaneously bounded.
Conversely, we construct a graph class with bounded half-graph index (but unbounded co-matching index), for which the problem is W[1]-hard.

For the W[1]-hard cases of our classification, we review the state of approximation algorithms. Here, we contribute an approximation algorithm for \textsc{Independent Set} on classes of bounded half-graph index.
\end{abstract}

\begin{picture}(0,0)
\put(422,-240)
{\hbox{\includegraphics[width=40px]{logo-erc.jpg}}}
\put(412,-300)
{\hbox{\includegraphics[width=60px]{logo-eu.pdf}}}
\end{picture}

\newpage

\section{Introduction}

\textsc{Independent Set}, \textsc{Clique}, and \textsc{Dominating Set} are three of the most fundamental NP-hard graph problems.
From the parameterized perspective, all three are intractable on general graphs as well:
\textsc{Independent Set} and \textsc{Clique} are $\mathrm{W}[1]$-complete~\cite{downeyfellows1995w1},
and \textsc{Dominating Set} is $\mathrm{W}[2]$-complete~\cite{downeyfellows1995basic} when parameterized by the solution size.
This motivates identifying structural restrictions under which these problems become tractable.

A natural approach is to forbid certain bipartite patterns in the adjacency relation.
Given two disjoint vertex sets $A = \{a_1, \ldots, a_t\}$ and $B = \{b_1, \ldots, b_t\}$, we can ask: what are the canonical ways to organize the edges between them? 
The three patterns we will be interested in are the following:
\begin{itemize}
    \item \makebox[4cm][l]{the \emph{matching} $M_t$:} \makebox[4cm][l]{$a_i b_j \in E \Leftrightarrow i = j$} (equality)
    \hfill
    \begin{tikzpicture}[baseline=-0.5ex, scale=0.35]
        \foreach \i in {1,...,4} {
            \fill ({\i*1.2},1) circle (3pt);
            \fill ({\i*1.2},-1) circle (3pt);
            \draw ({\i*1.2},1) -- ({\i*1.2},-1);
        }
    \end{tikzpicture}
    \item \makebox[4cm][l]{the \emph{co-matching} $C_t$:} \makebox[4cm][l]{$a_i b_j \in E \Leftrightarrow i \neq j$} (inequality)
    \hfill
    \begin{tikzpicture}[baseline=-0.5ex, scale=0.35]
        \foreach \i in {1,...,4} {
            \fill ({\i*1.2},1) circle (3pt);
            \fill ({\i*1.2},-1) circle (3pt);
        }
        \foreach \i in {1,...,4} {
            \foreach \j in {1,...,4} {
                \ifnum\i=\j\else
                    \draw ({\i*1.2},1) -- ({\j*1.2},-1);
                \fi
            }
        }
    \end{tikzpicture}
    \item \makebox[4cm][l]{the \emph{half-graph} $H_t$:} \makebox[4cm][l]{$a_i b_j \in E \Leftrightarrow i \leq j$} (order)
    \hfill
    \begin{tikzpicture}[baseline=-0.5ex, scale=0.35]
        \foreach \i in {1,...,4} {
            \fill ({\i*1.2},1) circle (3pt);
            \fill ({\i*1.2},-1) circle (3pt);
        }
        \foreach \i in {1,...,4} {
            \foreach \j in {\i,...,4} {
                \draw ({\i*1.2},1) -- ({\j*1.2},-1);
            }
        }
    \end{tikzpicture}
\end{itemize}

In some sense, these three patterns form the fundamental building blocks of complexity, as made precise in the following theorem proven independently by Ding et al.~\cite[Cor.\ 2.4]{DBLP:journals/jct/DingOOV96}, Alekseev~\mbox{\cite[Thm.\ 3]{alekseev1997lower}}, and Gravier et al.~\cite[Thm.\ 2]{gravier2004}. (See also \cite{HONS2026104303} for polynomial bounds in graphs of bounded VC-dimension.)

\begin{theorem}[{\cite{DBLP:journals/jct/DingOOV96,alekseev1997lower,gravier2004}}]\label{thm:trichotomy}
    There exists a function $Q : \N \to \N$ such that every bipartite graph without twins and with at least $Q(h)$ vertices on one side contains a matching, co-matching, or half-graph of order $h$ as an induced subgraph.
\end{theorem}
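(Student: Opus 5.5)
We prove the statement for a bipartite graph $G=(A,B,E)$ with $|A|\ge Q(h)$. Here ``without twins'' means that distinct vertices of $A$ have distinct neighbourhoods in $B$. (If the large side were $B$, swap the roles of $A$ and $B$.) The plan is a Ramsey-type argument. First we find many vertices of $A$ together with vertices of $B$ that pairwise separate them. Then we colour pairs according to the adjacency pattern and extract a homogeneous subsequence, which will be one of the three patterns.

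\textbf{Step 1 (distinguishing structure).} We want a sequence $a_1,\dots,a_m\in A$ with $m$ large, and for each consecutive pair a ``witness'' in $B$. The cleanest route is a decision-tree argument. Consider the hypercube-like structure of the neighbourhoods $N(a)\subseteq B$, and pick $b\in B$ splitting $A$. Repeatedly following the larger side gives a chain of nested subsets $A\supsetneq A_1\supsetneq\cdots$. The depth is at least $\log_2|A|$ when $|A|$ is large, since twin-freeness means every subset of size at least $2$ can be split by some vertex of $B$.

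This produces a sequence $a_1,\dots,a_m$ and $b_1,\dots,b_m$ with the following property. Either $b_i$ is adjacent to $a_i$ and to none of $a_{i+1},\dots,a_m$, or $b_i$ is non-adjacent to $a_i$ and adjacent to all of $a_{i+1},\dots,a_m$. Here $a_i$ is the vertex peeled off at level $i$ and $a_{i+1},\dots$ remain in the subtree. By pigeonhole we keep $m/2$ indices of one type; by complementing (which swaps matching with co-matching and keeps half-graphs) we may assume the first type. For $j>i$ the value $a_ib_j$ is still arbitrary, while $a_jb_i$ is non-adjacent.

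\textbf{Step 2 (Ramsey).} Colour each pair $i<j$ by whether $a_i b_j\in E$, and apply Ramsey's theorem to get a homogeneous set $I$ of size $h$.
\begin{itemize}
\item If $a_i b_j\notin E$ for all $i<j$ in $I$, then $a_ib_j\in E$ iff $i=j$, which is a matching $M_h$.
\item If $a_ib_j\in E$ for all $i<j$ in $I$, then $a_ib_j\in E$ iff $i\le j$, which is a half-graph $H_h$.
\end{itemize}
In the complemented case, the corresponding outcomes are a co-matching and a half-graph (reversed order). Hence $Q(h)=2^{2R(h)}$ or similar suffices, where $R$ is the Ramsey number.

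\textbf{Main obstacle.} The delicate step is Step 1: ensuring that the path in the splitting tree really has length logarithmic in $|A|$, and that the witnesses behave uniformly with respect to all later vertices. The approach to make this rigorous is the following. At each node of the tree, choose any splitting $b$ for the current set $S$ (possible since $|S|\ge2$ and there are no twins). Recurse into the larger part $S'$, and choose as $a_i$ any vertex of $S\setminus S'$. Then $b$ agrees on all of $S'$ and disagrees with $a_i$, exactly as needed. Since $|S'|\ge |S|/2$, the depth reaches $\log_2|A|$. Later witnesses only need to be evaluated on later vertices, which the Ramsey step handles.
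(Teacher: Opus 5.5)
The paper does not prove this theorem. It imports it from Ding--Oporowski--Oxley--Vertigan, Alekseev, and Gravier et al., and uses only the existence of $Q$ (in Corollary~\ref{cor:nd-trichotomy}). There is therefore no internal argument to compare with. Your proposal is a correct, self-contained proof.

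The splitting-tree step works as you make it precise at the end. Take $S_0=A$, let $b_i$ split $S_{i-1}$, let $S_i$ be the larger side, and pick $a_i\in S_{i-1}\setminus S_i$. This gives:
\begin{itemize}
\item $m\ge\lfloor\log_2|A|\rfloor$ levels;
\item $b_i$ is constant on $\{a_{i+1},\dots,a_m\}\subseteq S_i$ and takes the other value on $a_i$;
\item the $b_i$ are automatically distinct, since for $j>i$ the vertex $b_j$ splits $S_{j-1}\subseteq S_i$ while $b_i$ is constant on $S_i$.
\end{itemize}
You only use twin-freeness of the large side, which is all that is needed. Pigeonhole on the two types, then Ramsey on the colour of $a_ib_j$ for $i<j$, finishes the argument. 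The resulting $Q$ is doubly exponential.

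One small correction: the bipartite complement of $H_h$ is not a half-graph of order $h$, so the claim that complementing keeps half-graphs is off by one. Consider the case where each $b_i$ is non-adjacent to $a_i$ and adjacent to all later $a_j$, and the homogeneous set has $a_ib_j\notin E$ for all $i<j$. Then $a_ib_j\in E\iff i>j$. Here $a_1$ and $b_h$ have no neighbours in the pattern, and only after deleting them and reversing the order do you get $H_{h-1}$. The fix is to take the homogeneous set of size $h+1$, e.g.\ $Q(h):=2^{2R(h+1)}$. With that change the argument is complete.
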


With a bit of work, the bipartite statement extends to general graphs. 
Two vertices $u,v$ in a graph $G$ are \emph{twins} if $N(u)\setminus\{v\}=N(v)\setminus\{u\}$. 
This is an equivalence relation~(see, e.g., \cite[Lemma 1]{lampis2012algorithmic}), and we call its classes the \emph{twin classes}.
The \emph{neighborhood diversity} of $G$, introduced by Lampis~\cite{lampis2012algorithmic}, is the number of twin classes of \(G\).
Each class induces either a clique or an independent set, and between any two classes the adjacency is either complete or empty.
Thus, graphs of small neighborhood diversity are extremely simple.
The interesting case is when neighborhood diversity is large: then one of the three patterns must appear in $G$ in a semi-induced way.
Formally, a bipartite graph $H$ appears \emph{semi-induced} in $G$ if there exist $A,B\subseteq V(G)$ such that the bipartite
subgraph between $A$ and $B$ is isomorphic to $H$. The \emph{matching index},
\emph{co-matching index}, and \emph{half-graph index} of $G$ are the maximum orders
of semi-induced matchings, co-matchings, and half-graphs, respectively. For general
graphs, \Cref{thm:trichotomy} therefore implies the following corollary (see \Cref{sec:proof-nd-trichotomy}).

\begin{restatable}{corollary}{NdTrichotomy}\label{cor:nd-trichotomy}
    There exists a function \(Q : \N \to \N\) such that every graph with neighborhood diversity at least~\(Q(h)\)
    has either matching index, co-matching index, or half-graph index at least \(h\).
\end{restatable}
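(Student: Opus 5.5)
Pick one representative from each twin class, obtaining a set $R$ with $|R| = n \ge Q(h)$ vertices that are pairwise non-twins in $G$. From $R$ we build a bipartite graph to which \Cref{thm:trichotomy} applies. Let $R'$ be a disjoint copy of $R$, and let $B$ be the bipartite graph with sides $R$ and $R'$ in which $u \in R$ is adjacent to $v' \in R'$ iff $uv \in E(G)$. Diagonal pairs $u,u'$ are non-adjacent. The point of this "bipartite double cover" is that any induced pattern we later find in $B$ must be transferred back into $G$ as a semi-induced pattern on disjoint sets $A, B_0 \subseteq V(G)$.

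\textbf{Twin-freeness.} Two vertices $u,v\in R$ are twins in $B$ iff $N_G(u)\cap R = N_G(v)\cap R$. This fails whenever $u,v$ are adjacent, since then $v\in N(u)$ but $v\notin N(v)$. The cleanest fix is to first pass to a large subset on which the bipartite structure is twin-free, and to handle twin pairs separately.

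Non-twins $u,v$ in $G$ are distinguished by some vertex $w\notin\{u,v\}$. Because $R$ contains a representative of every twin class, we may take $w \in R$: any vertex in the class of $w$ distinguishes $u,v$ equally well, unless $u$ or $v$ itself lies in that class, which is impossible since they are representatives distinct from $w$'s class. So the rows of the adjacency matrix of $G[R]$ differ off the diagonal. To avoid diagonal effects, I would apply Ramsey's theorem to $R$ to obtain a large subset $S$ that is a clique or an independent set. If $S$ is independent, put $0$ on the diagonal; if $S$ is a clique, put $1$, i.e.\ add the diagonal edges $uu'$. Distinguishing vertices may lie outside $S$, so I would instead keep $R$ whole and set the diagonal in $B$ to make $B$ twin-free. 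Concretely, take $u\sim u'$, which makes $B$ correspond to closed neighborhoods. Then $u,v$ are twins in $B$ iff $N[u]\cap R=N[v]\cap R$, which for non-adjacent $u,v$ cannot happen, and for adjacent $u,v$ requires a distinguishing $w\ne u,v$, which exists. So $B$ with closed neighborhoods is twin-free, and \Cref{thm:trichotomy} yields an induced $M_t$, $C_t$ or $H_t$ in $B$ of order $t=Q'(h)$ large.

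\textbf{Main obstacle: transfer back to $G$.} The found pattern has sides $\{a_i\}\subseteq R$ and $\{b_j'\}\subseteq R'$, but $a_i$ and $b_j$ may coincide as vertices of $G$, and diagonal entries use $N[\cdot]$ rather than $N(\cdot)$. Each $a_i$ equals at most one $b_j$. I would build the conflict graph on indices, with an edge $i$–$j$ when $a_i=b_j$; it has maximum degree $2$. Passing to an independent set of at least $t/3$ indices (or greedily, a set where no $a_i$ equals any $b_j$ for kept $i,j$) preserves the pattern type, since sub-patterns of $M,C,H$ on a common index subset are of the same type. After this, all used pairs are off-diagonal, so adjacency in $B$ equals adjacency in $G$. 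This gives a semi-induced pattern of order $\ge t/3$ on disjoint vertex sets, so choosing $t=3h$ and setting the new $Q(h)$ to the theorem's $Q(3h)$ finishes the proof.
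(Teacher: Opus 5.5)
\textbf{There is a genuine gap: both the twin-freeness step and the transfer step fail.}

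\textbf{Twin-freeness.} You claim that a vertex distinguishing $u,v\in R$ can be replaced by one in $R\setminus\{u,v\}$, because $u,v$ cannot lie in the distinguishing vertex's class. That is false. The distinguishing vertex $w$ may be a twin of $u$ itself. In your closed-neighborhood setup it is then a false twin, non-adjacent to $u$ but adjacent to $v$, and its representative is $u$.

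Concretely, let $G$ be $K_{2m}$ minus a perfect matching $\{u_iw_i\}$ and take $R=\{u_1,\dots,u_m\}$. The twin classes are exactly the pairs $\{u_i,w_i\}$, and $R$ is a clique. Your closed-neighborhood $B$ is then $K_{m,m}$, a single twin class, so \Cref{thm:trichotomy} gives you nothing. The open variant fails symmetrically when $G$ is a perfect matching and $R$ is independent, since $B$ is then edgeless. In both examples the large pattern lives between $\{u_i\}$ and $\{w_i\}$, i.e.\ between representatives and their twins, which your $R$ discards.

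\textbf{Transfer.} You only handle coincidences $a_i=b_j$ with $i\neq j$. Diagonal coincidences $a_i=b_i$ are loops in your conflict graph, and no index subset removes them. With closed neighborhoods they occur naturally. Take $G$ a perfect matching $\{u_iw_i\}$ and $R=\{u_i\}$. Then $B$ is the perfect matching $\{u_iu_i'\}$, which is twin-free, and the $M_t$ it yields consists entirely of diagonal edges $u\sim u'$. These correspond to no edge of $G$.

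\textbf{The missing idea.} The second side must come from outside the set, including the non-representative vertices. You briefly considered Ramsey and dropped it because distinguishing vertices may lie outside $S$; that observation is precisely the fix. Apply Ramsey to $R$ to get a homogeneous $A$ of size $Q_{\mathrm{bi}}(h)$. Homogeneity forces every vertex distinguishing two members of $A$ to lie in $V(G)\setminus A$, which contains all twins of members of $A$. So the bipartite graph between $A$ and one representative per distinct trace $N(w)\cap A$, $w\notin A$, is twin-free. Its sides are disjoint, so any induced pattern from \Cref{thm:trichotomy} is already semi-induced in $G$ and no transfer step is needed. This is the paper's proof.
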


% This corollary explains why semi-induced variants of the three patterns are the ``right'' ones to study.
A graph class has \emph{bounded} matching/co-matching/half-graph index if the corresponding parameter is bounded by a constant across all graphs in the class. We also call these classes \emph{matching-, co-matching-,} and \emph{half-graph-free}.
This yields a natural classification of graph classes into eight cases, depending on which combination of the three indices is bounded.

\subsection*{Contribution 1: A complete classification}

Our main contribution is a complete classification of the parameterized complexity of \textsc{Independent Set}, \textsc{Clique}, and \textsc{Dominating Set} across all eight combinations of boundedness for the matching, co-matching, and half-graph index.
The results are summarized in \Cref{tab:classification}.
Observe that the three indices enjoy the following nice duality properties.
Fix a graph class $\CC$ and let $\overline{\CC}$ be the class containing the complements of the graphs in $\CC$.
\begin{itemize}
    \item $\CC$ is matching-free if and only if $\overline{\CC}$ is co-matching-free.
    \item $\CC$ is half-graph-free if and only if $\overline{\CC}$ is half-graph-free.
\end{itemize}
As cliques are independent sets in the complement graph,
a complete classification of the \textsc{Independent Set} column also yields a classification of the \textsc{Clique} column.
Hence, we focus our exposition on \textsc{Independent Set} and \textsc{Dominating Set}.

\newcommand{\yesmark}{\ensuremath{\checkmark}}
\newcommand{\nomark}{\ensuremath{\times}}

% Define toned-down colors
\definecolor{ptimegreen}{RGB}{200, 230, 200}
\definecolor{fptgreen}{RGB}{220, 240, 220}
\definecolor{hardred}{RGB}{245, 215, 215}
\definecolor{indexgray}{gray}{0.88}

% Result cells with colors
\newcommand{\rcellptime}[2]{\cellcolor{ptimegreen}#1\hfill{\scriptsize #2}}
\newcommand{\rcellfpt}[2]{\cellcolor{fptgreen}#1\hfill{\scriptsize #2}}
\newcommand{\rcellhard}[2]{\cellcolor{hardred}#1\hfill{\scriptsize #2}}

\begin{table}[h]
\centering
\renewcommand{\arraystretch}{1.1}
\begin{tabularx}{\textwidth}{|>{\cellcolor{indexgray}}c|>{\cellcolor{indexgray}}c|>{\cellcolor{indexgray}}c|X|X|X|}
\hline
\rowcolor{indexgray}
$m$ & $c$ & $h$ & \centering \textsc{Independent Set} & \centering \textsc{Clique} & \centering\arraybackslash \textsc{Dominating Set} \\
\hline
\hline
\yesmark & \yesmark & \yesmark
    & \rcellptime{uniform PTime}{\cite{lampis2012algorithmic}}
    & \rcellptime{uniform PTime}{\cite{lampis2012algorithmic}}
    & \rcellptime{uniform PTime}{\cite{lampis2012algorithmic}} \\
\hline
\hline
\yesmark & \yesmark & \nomark
    & \rcellptime{PTime}{\cite{BUIXUAN201366}}
    & \rcellptime{PTime}{\cite{BUIXUAN201366}}
    & \rcellptime{PTime}{\cite{BUIXUAN201366}} \\
\hline
\yesmark & \nomark & \yesmark
    & \rcellptime{PTime}{\cite{BUIXUAN201366}}
    & \rcellfpt{uniform FPT}{Thm.~\ref{thm:IS-kernel}}
    & \rcellptime{PTime}{\cite{BUIXUAN201366}} \\
\hline
\nomark & \yesmark & \yesmark
    & \rcellfpt{uniform FPT}{Thm.~\ref{thm:IS-kernel}}
    & \rcellptime{PTime}{\cite{BUIXUAN201366}}
    & \rcellfpt{uniform FPT}{\cite{FabianskiPST19}} \\
\hline
\hline
\nomark & \nomark & \yesmark
    & \rcellhard{W[1]-hard}{Thm.~\ref{thm:hardness-is-halfgraph-free}}
    & \rcellhard{W[1]-hard}{Thm.~\ref{thm:hardness-is-halfgraph-free}}
    & \rcellhard{W[1]-hard}{\cite{cygan2015parameterized}, Thm.~\ref{thm:hardness-ds-halfgraph-free}} \\
\hline
\nomark & \yesmark & \nomark
    & \rcellhard{W[1]-hard}{\cite{marx05}, Cor.~\ref{cor:unitquare_comatching}}
    & \rcellptime{PTime}{\cite{BUIXUAN201366}}
    & \rcellhard{W[1]-hard}{\cite{marx06}, Cor.~\ref{cor:unitquare_comatching}} \\
\hline
\yesmark & \nomark & \nomark
    & \rcellptime{PTime}{\cite{BUIXUAN201366}}
    & \rcellhard{W[1]-hard}{\cite{marx05}, Cor.~\ref{cor:unitquare_comatching}}
    & \rcellptime{PTime}{\cite{BUIXUAN201366}} \\
\hline
\hline
\nomark & \nomark & \nomark
    & \rcellhard{W[1]-hard}{\cite{cygan2015parameterized}}
    & \rcellhard{W[1]-hard}{\cite{cygan2015parameterized}}
    & \rcellhard{W[2]-hard}{\cite{cygan2015parameterized}} \\
\hline
\end{tabularx}
\caption{
    Classification of \textsc{Independent Set}, \textsc{Clique}, and \textsc{Dominating Set}.
    Each row considers graph classes \(\Cc\)
    that are matching-free ($m$), co-matching-free ($c$), or half-graph-free ($h$), as indicated by \nomark~and~\yesmark.
    We write \emph{W[1]-hard} if there exists a graph class \(\Cc\) with the considered restrictions
    on which the problem is W[1]-hard.
    We write \emph{FPT} and \emph{PTime} to indicate that the problem can be solved in time \(f(k) \cdot \textnormal{poly}(n)\) for some function~\(f\) and solution size \(k\),
    or \(\textnormal{poly}(n)\) for each graph class \(\Cc\) with the considered restrictions.
    Note that the degree of \(\textnormal{poly}(n)\) may depend on \(\Cc\).
    In contrast, we write \emph{uniform FPT} or \emph{uniform PTime} if the degree of \(\textnormal{poly}(n)\) (but not necessarily the function~\(f\)) can be bounded uniformly, independent of the considered graph class.
}\label{tab:classification}
\end{table}

\paragraph{Known entries.}
%We briefly explain the table.
When all three indices are bounded, neighborhood diversity is bounded, and all three problems are polynomial-time solvable~\cite{lampis2012algorithmic}.
More generally, when the matching index is bounded, every branch decomposition has bounded \emph{mim-width} (maximum induced matching width), yielding polynomial-time algorithms for \textsc{Independent Set} and \textsc{Dominating Set}~\cite{BUIXUAN201366}.
These results account for the PTime entries in \Cref{tab:classification}.

For parameterized complexity, Fabiański et al.~\cite{FabianskiPST19} show that \textsc{Dominating Set} is FPT on classes that are simultaneously half-graph-free and co-matching-free.
On general graphs, \textsc{Independent Set} and \textsc{Clique} are W[1]-hard and \textsc{Dominating Set} is W[2]-hard~\cite{cygan2015parameterized}.

\paragraph{New entries.}
Several hardness entries use previously known reductions, but the connection to the pattern indices had not been made explicit.
For instance, it is known that \textsc{Independent Set}~\cite{marx05} and \textsc{Dominating Set}~\cite{marx06} are W[1]-hard on unit square graphs; we observe that these graphs have bounded co-matching index (\Cref{cor:unitquare_comatching}).
In such cases, we cite both the original hardness result and our observation connecting it to the pattern indices.

Our main new tractability result is that \textsc{Independent Set} is FPT on classes that are simultaneously half-graph-free and co-matching-free (\Cref{thm:IS-kernel}).
We obtain this via indiscernibility-based methods inspired by~\cite{DreierMST23}:
in such graphs, long vertex sequences arise where outside vertices connect to almost all or almost none of the sequence elements, enabling efficient reduction rules.
Note that \textsc{Independent Set} and \textsc{Dominating Set} are NP-hard on planar graphs~\cite{gareyjohnson1979}, which are co-matching-free and half-graph-free, justifying the use of FPT algorithms.

As our main new hardness result, we construct a half-graph-free graph class on which \textsc{Independent Set} is W[1]-hard (\Cref{thm:hardness-is-halfgraph-free}). 

\subsection*{Contribution 2: Approximation}
\textsc{Dominating Set} admits an $\Oof(\log n)$-approximation via the greedy algorithm~\cite{Johnson74a},
while \textsc{Independent Set} and \textsc{Clique} are notoriously hard to approximate:
PCP-based results rule out $n^{1-\varepsilon}$-approximations under standard assumptions~\cite{Hastad96}.
To obtain even better approximation guarantees for \textsc{Dominating Set}, one may consider the VC-dimension of the graph's neighborhood set system.
On graphs of bounded VC-dimension, $\varepsilon$-net methods yield an $\Oof(\log k)$-approximation \cite{BronnimannG95}.
But bounded VC-dimension does not yield comparable improvements for \textsc{Independent Set} or \textsc{Clique}:
for instance, $d$-dimensional box graphs have VC-dimension \(\Oof(d^2)\)~\cite{goldberg93bounding}, but no $\Oof(\log^{d-1} n)$-approximation is known for \textsc{Independent Set}~\cite{boxgraphs}.

We show that bounding the pattern indices \emph{does} yield non-trivial approximations: Given a graph~$G$, we show how to compute \ldots
\begin{enumerate}
    \item \ldots in time \(\Oof(n^{2h+2})\) an independent set of size $k^{1/h}$, \hfill (\Cref{thm:approx-halfgraph})
    \item \ldots in time \(\Oof(n^3\log(n))\) an independent set of size $\log_{2c+2}(k)-2$, \hfill (\Cref{cor:is-comatching})
    % \item On graphs with half-graph index $h$, we find in time \(\Oof(n^{2h+1})\) independent sets of size $k^{1/h}$ (\Cref{thm:approx-halfgraph}).
    % \item On graphs with co-matching index $c$, we find in time \(\Oof(n^3\log(n))\) independent sets of size $\log_{2c+2}(k)-2$ (\Cref{cor:is-comatching}).
\end{enumerate}
where $h$ is the half-graph index of $G$, $c$ is the co-matching index of $G$, and $k$ is the size of the largest independent set in $G$.
The first result is based on a novel branching algorithm with branching depth bounded by the half-graph index.
The second result is an algorithmic utilization of an argument by Gyárfás to show \(\chi\)-boundedness of graph classes excluding an induced path~\cite{A1987}.

\subsection*{Related work}

We highlight three connections of our work. See \Cref{fig:hierarchy} for a diagram of relevant notions.

% Our results on the parameterized complexity of \textsc{Independent Set}, \textsc{Clique}, and \textsc{Dominating Set} can be understood as a continuation of two lines of work:
% as an extension of \emph{sparsity theory} beyond the realm of biclique-free graphs,
% and as a continuation of \emph{stability theory} from model theory.

\paragraph{Connections to sparsity theory.}
The \emph{biclique index} of a graph is the maximum $t$ such that $K_{t,t}$ appears semi-induced.
Accordingly, a class is \emph{biclique-free} if its biclique index is bounded.
Biclique-free graph classes are extremely general sparse graph classes, as they generalize planar graphs, graphs of bounded degeneracy, minor-free graphs, and nowhere dense graph classes.
However, from the perspective of our three target problems, the biclique-free regime is largely under control.
\textsc{Independent Set} and \textsc{Clique} are FPT and polytime solvable, respectively, for trivial reasons:
Any $K_r$ contains $K_{\lfloor r/2\rfloor, \lfloor r/2\rfloor}$ semi-induced, so the clique number is bounded by a class-dependent constant.
By Ramsey-type arguments, bounded clique number implies the existence of unbounded independent sets, and therefore \textsc{Independent Set} is~FPT.
Also \textsc{Dominating Set} has been proven to be~FPT on biclique-free classes~\cite{TelleV12}.

To go beyond sparse classes, the community considered classes that are simultaneously half-graph-free and co-matching-free (also called the \emph{semi-ladder-free classes}).
Semi-ladder-freeness generalizes biclique-freeness, but also includes dense classes such as map graphs~\cite{FabianskiPST19} or more generally all graph classes that can be encoded by existential positive first-order formulas in nowhere dense classes~\cite{MahlmannSiebertz26E}.
Fabiański et al.~\cite{FabianskiPST19} and Guillemot \cite{guillemot2025parameterized}
showed that \textsc{Dominating Set} is FPT even on all semi-ladder-free classes.
Guillemot~\cite{guillemot2025parameterized} asked whether tractability can be extended to classes that are just half-graph-free (and not necessarily co-matching-free).
It is natural to ask the same question for classes that are just co-matching-free (and not necessarily half-graph-free).
Our work answers both questions negatively.

The progressive exploration approach of Fabiański et al.~\cite{FabianskiPST19} also applies to  \textsc{Independent Set}, albeit for this problem, they could only achieve tractability in a strict subfragment of semi-ladder-freeness. Our FPT algorithm for semi-ladder-free classes~(\Cref{thm:IS-kernel}) generalizes their results.

\paragraph{Connections to width measures.}
Eiben et al.~\cite{EibenGHJK22} introduced a unifying framework for width measures based on the notion of $\mathcal{F}$-branchwidth.
In this framework, the same patterns we studied (half-graphs, co-matchings, and matchings) appear as forbidden semi-induced subgraphs between the sides of a cut in branch decompositions. Building on this framework, in~\cite{bergougnoux_et_al:LIPIcs.ESA.2025.16}, the width measure \emph{cham-width} (``chain-anti-matching-width'') was defined, which measures the co-matching and half-graph indices across cuts. 
Classes of bounded cham-width strictly generalize semi-ladder-free classes. 
It was asked in \cite{bergougnoux_et_al:LIPIcs.ESA.2025.16} whether \textsc{Independent Set} is FPT on classes of bounded cham-width.
Our FPT algorithm for semi-ladder-free classes gives hope for a positive answer.

\paragraph{Connections to model theory.}
The half-graph pattern has strong ties to model theory, where its absence characterizes \emph{stable} theories.
\emph{Monadically stable} classes are those in which all relations definable in first-order logic with colors are half-graph-free.
The first-order model checking algorithm for monadically stable classes implies fixed-parameter tractability of \textsc{Independent Set}, \textsc{Clique}, and \textsc{Dominating Set}~\cite{dreier2024first,dreier2023first} on these classes.
The requirement of half-graph-freeness for \emph{all first-order definable relations} excludes many natural graph classes.
It is therefore natural to ask whether tractability can be extended to classes where
\emph{only the plain adjacency relation} is half-graph-free.
Our hardness results (\Cref{thm:hardness-is-halfgraph-free,thm:hardness-ds-halfgraph-free}) show that edge-level half-graph-freeness alone does not suffice for tractability of the three considered problems, justifying the necessity of the stronger assumptions used in stability theory.

\begin{figure}%[htbp]
    \centering
    \includegraphics[scale = 0.8]{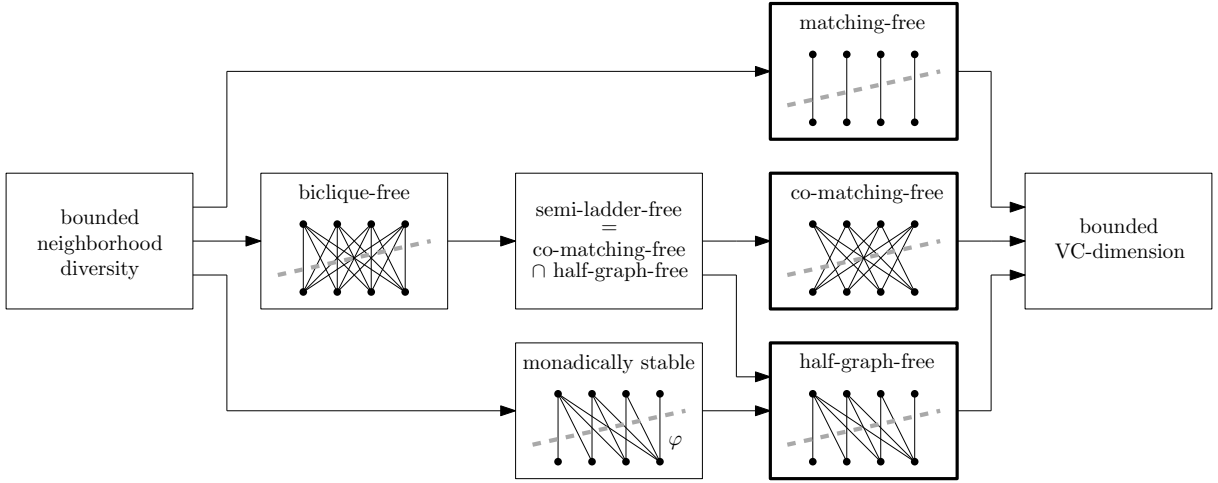}
    \caption{A hierarchy of graph class properties.
    An arrow $P_1 \to P_2$ means every graph class with property $P_1$ also has property $P_2$.
    }
    \label{fig:hierarchy}
\end{figure}

%Recently, a rich theory of \emph{monadic stability} and (more generally) \emph{monadic dependence/NIP} has emerged, where one requires ladder-freeness not only of the edge relation itself but of all definable relations in unary
%expansions of the class, see~\cite{baldwin1985second,braunfeld2025decomposition,dreier2024first,DreierMS23,dreier2024flip,GajarskyMMOPPSS23,dreier_et_al23}.
%This line culminates in the FPT model-checking algorithm for first-order logic on monadically
%stable graph classes~\cite{dreier2024first}, which in particular implies the fixed-parameter tractability of \textsc{Dominating Set, Independent Set} and \textsc{Clique} in monadically stable classes.
%The co-matching-free fragment of monadic stability has
%very recently been investigated in~\cite{MahlmannSiebertz26E}.
%We stress that the theory of monadically stable classes relies on substantially heavier machinery and,
%crucially, asks for ladder exclusion to hold under transductions.
%In contrast, our results in this paper focus on the plain adjacency relation and analyze
%how bounding the ladder/co-matching/matching indices already
%governs the complexity landscape of \textsc{Dominating Set}, \textsc{Independent Set}, and
%\textsc{Clique}.

\section{Preliminaries}\label{sec:prelims}

We use standard notation from graph theory and model theory
and refer to~\cite{Diestel} and~\cite{Hodges} for extensive background.
We write $[m]$ for the set of integers $\{1,\ldots,m\}$.
For a vertex $v$ in a graph, we write $N(v)$ for its open neighborhood (\emph{not} containing $v$) and $N[v] := N(v) \cup \{v\}$ its closed neighborhood.
\section{Hardness of Independent Set for half-graph-free classes}
In this section, we prove the existence of a graph class that has bounded half-graph index, but on which the \textsc{Independent Set} problem is W[1]-hard.
We will do this by providing a reduction from the \textsc{Grid Tiling} problem that does not create large half-graphs.
Our reduction is inspired by a reduction from~\cite{independent_set_in_h_free_graphs} between the same two problems.
We note that the \textsc{Independent Set} instances created in~\cite{independent_set_in_h_free_graphs} have unbounded half-graph index but bounded co-matching index (see also \Cref{sec:hardness-cmi} for a different reduction for the co-matching case).
Conversely, the instance that we create will have bounded half-graph index but unbounded co-matching index. As we later show in \Cref{sec:is-semiladder-FPT}, this tradeoff is unavoidable under standard complexity assumptions, since \textsc{Independent Set} is FPT on every class that has simultaneously bounded half-graph and co-matching index.

\medskip
The \textsc{Grid Tiling} problem was introduced in~\cite{marx2007optimality}.
Its input consists of integers $k, n \in \N$ and a collection~$\mathcal{S}$ containing a non-empty set of \emph{tiles} $S_\pi \subseteq [n] \times [n]$ for each \emph{grid cell} $\pi \in [k]^2$.
The goal is to decide whether there exists
a selection of tiles $s(\pi) \in S_\pi$ for all $\pi \in [k]^2$ such that:
\begin{itemize}
    \item The tiles $s(\pi)$ and $s(\pi_\blacktriangleright)$ have the same first component
    
    for all cells $\pi = (i,j) \in [k-1] \times [k]$ and $\pi_\blacktriangleright := (i+1,j)$; and

    \item The tiles $s(\pi)$ and $s(\pi_\triangledown)$ have the same second component
    
    for all cells $\pi = (i,j) \in [k] \times [k-1]$ and $\pi_\triangledown := (i,j+1)$.
\end{itemize}
See \Cref{tab:gt-example} for an example instance.
It is known that \textsc{Grid Tiling} is W[1]-hard parameterized by the grid size $k$~\cite[Thm.\ 14.28]{cygan2015parameterized}.

\begin{theorem}\label{thm:hardness-is-halfgraph-free}
    There is a graph class with half-graph index at most $256$ on which \textsc{Independent Set} is W[1]-hard.
\end{theorem}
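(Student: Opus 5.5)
The plan is to use the most direct reduction from \textsc{Grid Tiling} to \textsc{Independent Set}. We then show that it already produces graphs with bounded half-graph index, because every edge comes from one of a few very restricted bipartite relations.

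\textbf{Construction.} Given an instance $(k,n,\SS)$, build $G$ as follows.
\begin{itemize}
    \item For every cell $\pi\in[k]^2$ and every tile $s\in S_\pi$, create a vertex $v_{\pi,s}$. The vertices of one cell form a clique.
    \item For horizontally adjacent cells $\pi,\pi_\blacktriangleright$, make $v_{\pi,s}$ and $v_{\pi_\blacktriangleright,s'}$ adjacent iff $s$ and $s'$ have different first components.
    \item For vertically adjacent cells $\pi,\pi_\triangledown$, make them adjacent iff the second components differ.
    \item There are no other edges.
\end{itemize}
Ask for an independent set of size $k^2$.

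\textbf{Correctness.} Each cell is a clique, so an independent set of size $k^2$ picks exactly one tile per cell. Non-adjacency between neighbouring cells is exactly the Grid Tiling consistency condition. Conversely, a valid tiling gives an independent set of size $k^2$. The parameter is $k^2$, a function of $k$, and the construction is polynomial. W[1]-hardness then follows from the W[1]-hardness of \textsc{Grid Tiling} \cite[Thm.\ 14.28]{cygan2015parameterized}. The graph class is the set of all graphs produced by this reduction.

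\textbf{Bounding the half-graph index.} Suppose $A=\{a_1,\dots,a_t\}$ and $B=\{b_1,\dots,b_t\}$ form a semi-induced half-graph, so $a_ib_j\in E$ iff $i\le j$.
\begin{itemize}
    \item \emph{Step 1: the pattern lives near two cells.} Vertex $b_t$ is adjacent to all of $A$. Every vertex is adjacent only to vertices in its own cell and its at most four grid-neighbours. Hence all $a_i$ lie among the at most $5$ cells around the cell of $b_t$. Symmetrically, $a_1$ is adjacent to all of $B$, so all $b_j$ lie among the at most $5$ cells around the cell of $a_1$.
    \item \emph{Step 2: reduce to a single pair of cells.} Colour each index $i$ by the pair $(\mathrm{cell}(a_i),\mathrm{cell}(b_i))$; by Step 1 there are at most $25$ colours. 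Keep one colour class $I$ of size at least $t/25$. Restricting to the pairs $(a_i,b_i)$ with $i\in I$ still gives a semi-induced half-graph, since restricting to a subsequence of indices preserves the order pattern. Now all remaining $a$'s lie in a single cell $\pi_A$ and all remaining $b$'s in a single cell $\pi_B$.
    \item \emph{Step 3: each pair of cells has tiny half-graph index.} Between $\pi_A$ and $\pi_B$ the bipartite adjacency is one of three types: complete (when $\pi_A=\pi_B$, a clique), empty, or ``$x(a)\neq x(b)$'' for a coordinate map $x$. Complete and empty relations contain no half-graph of order $2$. For the co-equivalence, suppose there is a half-graph of order $3$. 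Then $x(a_3)=x(b_2)$ and $x(a_3)=x(b_1)$, and also $x(a_2)=x(b_1)$, since these pairs are non-adjacent. This forces $x(a_2)=x(b_2)$, contradicting the edge $a_2b_2$. So the order is at most $2$.
\end{itemize}
Combining the steps gives $t/25\le 2$, so the half-graph index is at most $50$, well within the stated bound of $256$.

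\textbf{Expected obstacle.} The main points to verify carefully are the subsequence argument in Step 2 and the degenerate cases: the two cells coinciding, or the two cells being non-adjacent. If my simple construction turned out to have a flaw, the fallback would be a gadget-based variant in the style of \cite{independent_set_in_h_free_graphs}, controlled by the same locality-plus-pigeonhole argument. The slack between $50$ and $256$ suggests that such a variant, with more vertex types per cell, is what the authors analyse.
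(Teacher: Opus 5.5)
Your proof is correct, but it takes a different and simpler route than the paper. Your last paragraph worries about a possible flaw in the simple construction; there is none.

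\textbf{Construction.} The paper does not use one clique per cell. Following the reduction of \cite{independent_set_in_h_free_graphs}, it gives each tile $\tau\in S_\pi$ four copies $u(\pi,\tau),r(\pi,\tau),d(\pi,\tau),\ell(\pi,\tau)$, placed in four cliques $U_\pi,R_\pi,D_\pi,L_\pi$. Consecutive cliques in the cycle $U_\pi R_\pi D_\pi L_\pi$ are joined by co-matchings, so four pairwise non-adjacent vertices of $H_\pi$ must be the four copies of a single tile. The ``different first/second component'' edges go only between $R_\pi$ and $L_{\pi_\blacktriangleright}$ and between $D_\pi$ and $U_{\pi_\triangledown}$, and the target size is $4k^2$. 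Your construction shows this gadget is unnecessary for half-graph-freeness, since co-equivalence edges between single-clique cells already force consistency.

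\textbf{Bounding the index.} The paper applies the dichotomy ``all in one part of the clique partition, or all in pairwise distinct parts'' twice, once for each side, losing a square root each time. It bounds the distinct-parts case by $4$, because a vertex complete to the other side has neighbours in at most four parts. It then finishes with essentially your Step~3 (clique, co-matching, or co-equivalence between two parts), giving $m\le 4^4=256$. You instead use the complete vertices $b_t$ and $a_1$ to confine each side to at most five cells. Pigeonholing over at most $25$ cell pairs then loses only a constant factor, which is why you get $50$. Your localisation argument applied to the paper's graph would give $2\cdot 16=32$, so the $256$ comes from the fourth-root pigeonholing rather than from the gadget.

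\textbf{Trade-off.} The paper's version gives a construction that visibly mirrors the co-matching-free reduction it is contrasted with. Yours gives a shorter reduction and a better constant.
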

\begin{proof}
Given a \textsc{Grid Tiling} instance $((\mathcal{S},k,n),k)$ we reduce it to
the \textsc{Independent Set} instance $(G_\mathcal{S}, 4k^2)$ by building $G_\mathcal{S}$ as follows. (See \Cref{fig:gt} for an illustration of the construction.)

For every grid cell $\pi \in [k]^2$, we construct the graph $H_\pi$ by first creating four disjoint cliques 
\begin{multicols}{2}
\begin{enumerate}
    \item $U_\pi := \{ u(\pi,\tau) : \tau \in S_\pi\}$,
    \item $R_\pi := \{ r(\pi,\tau) : \tau \in S_\pi\}$,
    \item $D_\pi := \{ d(\pi,\tau) : \tau \in S_\pi\}$,
    \item $L_\pi := \{ \ell(\pi,\tau) : \tau \in S_\pi\}$,
\end{enumerate}
\end{multicols}
named after the four directions \emph{up}, \emph{right}, \emph{down}, and \emph{left}.
Additionally, for every pair of distinct tiles $\tau \neq \tau' \in S_\pi$ we add to $H_\pi$ the edges 
\begin{multicols}{2}
    \begin{itemize}
        \item between $u(\pi,\tau)$ and $r(\pi,\tau')$,
        \item between $r(\pi,\tau)$ and $d(\pi,\tau')$,
        \item between $d(\pi,\tau)$ and $\ell(\pi,\tau')$,
        \item between $\ell(\pi,\tau)$ and $u(\pi,\tau')$.
    \end{itemize}
\end{multicols}
This means $U_\pi$ and $R_\pi$ semi-induce a co-matching, and the same holds for $(R_\pi,D_\pi)$, $(D_\pi,L_\pi)$, and $(L_\pi,U_\pi)$.
We construct $G_\SS$ by taking the disjoint union of all the $H_\pi$ for $\pi \in [k]^2$, and additionally adding edges between
\begin{itemize}
    \item $r(\pi,\tau)$ and $\ell(\pi_\blacktriangleright,\tau')$ if and only if $\tau$ and $\tau'$ have a different first component

    for all $\pi \in [k-1] \times [k]$, $\tau \in S_\pi$, $\tau' \in S_{\pi_\blacktriangleright}$; and
    \item $d(\pi,\tau)$ and $u(\pi_\triangledown,\tau')$ if and only if $\tau$ and $\tau'$ have a different second component,

    for all $\pi \in [k] \times [k-1]$, $\tau \in S_\pi$, $\tau' \in S_{\pi_\triangledown}$.
\end{itemize}
This concludes the construction of $G_\SS$.

\begin{figure}[H]
\centering

\renewcommand{\arraystretch}{1.3}
\begin{tabular}{c|c|c|c}
 & $j=1$ & $j=2$ & $j=3$ \\
\hline
$i=1$ &
$\{\mathbf{(4,4)}, (5,4)\}$ &
$\{\mathbf{(4,5)}, (4,4), (5,5)\}$ &
$\{\mathbf{(4,6)}, (5,4)\}$ \\
\hline
$i=2$ &
$\{\mathbf{(5,4)}, (6,4), (5,5)\}$ &
$\{\mathbf{(5,5)}, (4,5), (5,4), (6,5)\}$ &
$\{\mathbf{(5,6)}, (4,6), (6,6)\}$ \\
\hline
$i=3$ &
$\{\mathbf{(6,4)}, (6,5)\}$ &
$\{\mathbf{(6,5)}, (5,5), (6,6)\}$ &
$\{\mathbf{(6,6)}, (5,5)\}$ \\
\end{tabular}
\captionof{table}{A positive instance of \textsc{Grid Tiling} with $k = 3$ and $n=6$.
The table displays the cells $\pi=(i,j)$ and lists the tiles contained in $S_\pi \subseteq [6]^2$.
A feasible solution is marked in bold.}
\label{tab:gt-example}

\vspace{1em}

\includegraphics[width = \textwidth]{figures/grid-tiling.pdf}
\captionof{figure}{A depiction of the graph $G_\SS$ for the \textsc{Grid Tiling} instance $\SS$ from \Cref{tab:gt-example}.
Only the edges inside and between the subgraphs $H_{2,2}$ and $H_{3,2}$ are drawn faithfully. The vertices from the other subgraphs have been omitted for the sake of clarity. The red dashed edges are non-edges that we highlighted to illustrate the co-matching connections inside each $H_\pi$.}
\label{fig:gt}
\end{figure}

\begin{claim}
    $(\mathcal{S},k,n)$ is a positive \textsc{Grid Tiling} instance if and only if 
    $(G_\mathcal{S}, 4k^2)$ is a positive \textsc{Independent Set} instance.
\end{claim}
\begin{claimproof}
    Assume there exists a solution for $(\mathcal{S},k,n)$ using tiles $s(\pi) \in S_\pi$  for $\pi \in [k]^2$.
    We claim that
    
    \[
        I = \bigcup_{\pi \in [k]^2} I_\pi
        \quad
        \text{with}
        \quad
        I_\pi := \{
            u(\pi,s(\pi)),
            r(\pi,s(\pi)),
            d(\pi,s(\pi)),
            \ell(\pi,s(\pi))
        \}
    \]
    is an independent set of size $4k^2$.
    Clearly, $I$ has the desired size.
    For each cell $\pi$, we have that $I_\pi$ is an independent set, as all contained vertices belong to the same tile $s(\pi)$.
    By construction, two vertices in distinct sets $x \in I_\pi$ and $y \in I_{\pi'}$ can only be adjacent if, up to symmetry, either
    \begin{itemize}
        \item $\pi' = \pi_\blacktriangleright$, 
        $x = r(\pi,s(\pi))$, 
        $y = \ell(\pi_\blacktriangleright, s(\pi_\blacktriangleright))$, and

        $s(\pi)$ and 
        $s(\pi_\blacktriangleright)$ have a different first component; or

        \item $\pi' = \pi_\triangledown$, 
        $x = d(\pi,s(\pi))$, 
        $y = u(\pi_\triangledown, s(\pi_\triangledown))$, and

        $s(\pi)$ and 
        $s(\pi_\triangledown)$ have a different second component.
    \end{itemize}
    Both possibilities are ruled out due to the $s(\pi)$ forming a solution to \textsc{Grid Tiling}. Hence, $I$ is indeed an independent set.

    Conversely, assume there exists an independent set $I$ of size $4k^2$.
    Since we constructed the $k^2$ graphs $H_\pi$ from four cliques each, 
    $I$ must contain exactly four vertices from each $H_\pi$.
    Since those four vertices are pairwise non-adjacent, 
    they must be of the form
    \[
    \left\{
            u(\pi,s(\pi)),
            r(\pi,s(\pi)),
            d(\pi,s(\pi)),
            \ell(\pi,s(\pi))
    \right\}
    \]
    for some common $s(\pi) \in S_\pi$.
    We now argue that the tiles $s(\pi)$ form a \textsc{Grid Tiling} solution for $(\mathcal{S},k,n)$.
    For every $\pi \in [k-1] \times [k]$, we have that $r(\pi,s(\pi))$ and $\ell(\pi_\blacktriangleright,s(\pi_\blacktriangleright))$ are part of the same independent set $I$ and hence non-adjacent.
    By construction, we must have that $s(\pi)$ and $s(\pi_\blacktriangleright)$ agree on the first component.
    Analogously, for every $\pi \in [k] \times [k-1]$, the tiles $s(\pi)$ and $s(\pi_\triangledown)$ agree on the second component.
    This concludes the proof of the reduction.
\end{claimproof}

By the above claim, we conclude that \textsc{Independent Set} is W[1]-hard on the graph class 
\[
    \CC := \{G_\SS : \SS \text{ is a \textsc{Grid Tiling} instance}\}.
\]
It only remains to bound the half-graph index.

\begin{claim}
    The largest semi-induced half-graph in $G_\mathcal{S}$ has order at most $256$.
\end{claim}

\begin{claimproof}
    Pick $A_0$ and $B_0$ to be sets of maximal size $m = |A_0| = |B_0|$ that semi-induce a half-graph in~$G_\SS$.
    Consider the partition 
    \[
        \PP = \bigcup_{\pi \in [k]^2} \{U_\pi, R_\pi, D_\pi, L_\pi\}    
    \]
    of $V(G_\SS)$ into cliques.
    Applying the pigeonhole principle twice, we can pass to subsets $A \subseteq A_0$ and $B \subseteq B_0$ of size $|A| = |B| \geq \sqrt{\sqrt{m}}$, such that $A$ and $B$ still semi-induce a half-graph, and either 
    \begin{enumerate}
        \item all vertices of $A$ are contained in a single part of $\PP$, or
        \item all vertices of $A$ are contained in pairwise different parts of $\PP$,
    \end{enumerate}
    and one of these two cases also holds for $B$.

    First, suppose that the second case applies to at least one of the two sets. By symmetry, we may assume it applies to $B$.
    Since $A$ and $B$ semi-induce a half-graph, there is a vertex $a \in A$ that is adjacent to all of $B$.
    Observe that in $G_\SS$, each individual vertex has neighbors in at most $4$ different parts of $\PP$. For example, a vertex in the part $L_{(3,2)} \in \PP$ can only have neighbors in $L_{(3,2)},U_{(3,2)},D_{(3,2)}, R_{(2,2)}$ (see \Cref{fig:gt}).
    This bounds $|A| = |B| \leq 4$.

    Otherwise, the first case applies to both $A$ and $B$: there are parts $X,Y \in \PP$ with $A \subseteq X$ and $B \subseteq Y$.
    If $X$ and $Y$ are part of the same subgraph $H_\pi$, then since there are edges between $X$ and~$Y$, they either semi-induce a co-matching or form a clique if $X = Y$. In both cases, each vertex in $A$ can have at most $1$ non-neighbor in $B$. Then the semi-induced half-graph between $A$ and $B$ can have size at most $|A| = |B| \leq 2$.
    
    If $X$ and $Y$ are part of different subgraphs, then up to swapping the roles of $X$ and~$Y$, there exists $\pi \in [k]^2$ such that either
    \begin{center}
        ($X = R_\pi$ and $Y = L_{\pi_\blacktriangleright}$)
        \quad
        or
        \quad
        ($X = D_\pi$ and $Y = U_{\pi_\triangledown}$).
    \end{center}
    We assume the first case, as the second is analogous.
    We claim that $A$ has size at most $2$. Assume, toward a contradiction, that $|A| \geq 3$. Then, since $A$ and $B$ semi-induce a half-graph, there are vertices 
    \[
    r(\pi,\alpha_1), r(\pi,\alpha_2), r(\pi,\alpha_3) \in A \subseteq R_\pi
    \quad\text{and}\quad
    \ell(\pi_\blacktriangleright,\beta_1), \ell(\pi_\blacktriangleright,\beta_2), \ell(\pi_\blacktriangleright,\beta_3) \in B \subseteq L_{\pi_\blacktriangleright},
    \]
    such that $r(\pi,\alpha_i)$ and $\ell(\pi_\blacktriangleright,\beta_j)$ are adjacent if and only if $i\leq j$ for all $i,j \in [3]$.
    This means the tile $\alpha_3$ must have the same first component as the tiles $\beta_1$ and $\beta_2$. However, the tile $\alpha_2$ must have the same first component as $\beta_1$, but a different first component than $\beta_2$.
    This contradicts the fact that ``having the same first component'' is an equivalence relation. Hence, $|A| = |B| \leq 2$.

    Having exhaustively bounded $|A| = |B| \leq 4$, we conclude that $m \leq (|A|^2)^2\leq 4^4 = 256$.
\end{claimproof}
    \noindent
    This concludes the proof of the theorem.
\end{proof}

%\begin{theorem}\label{thm:hardness-is-comatching-free}
%    There is a graph class with bounded co-matching index on which \textsc{Independent Set} is W[1]-hard.
%\end{theorem}
%
%\nikoin{The above hardness already follows from \cite{independent_set_in_h_free_graphs}, which presents a hard graph class, that excludes both an induces $C_4$ and an induced $K_{1,4}$. One can show that there is $t \in \N$ such that every graph that contains a semi-induced co-matching of size $t$ also contains either an induced $C_4$ or $K_{1,4}$. Hence, the class presented in \cite{independent_set_in_h_free_graphs} is co-matching-free.}
%
%\begin{proof}
%    Sketch. We do the same reduction, where the cycle co-matchings and the tile-connecting co-matchings are replaced by ladders.
%    The latter means that we connect $r_{i,j,s}$ and $l_{i+1,j,s'}$ if $s<s'$ and analogously for the columns.
%    Finally, we connect the last row with the first row by a ladder and the last column with the first column by a ladder, analogously.
%    Now observe that in a cycle of ladders we need to pick vertices with the same index. Say we choose index $i_1$ for the first tile. Hence, we need to choose $i_2\leq i_1$ for the second tile, $i_3\leq i_2$ for the third, and so on, and when we close the cycle we have to choose $i_k\leq i_1$, which implies that all indices must be equal.
%\end{proof}

\section{Approximation of Independent Set for half-graph-free classes}

\begin{theorem}\label{thm:approx-halfgraph}
    There is an algorithm that, given a graph $G$, finds in time $O(n^{2h+2})$ an independent set of size at least $k^{1/h}$,
    where $h$ is the half-graph index of $G$ and $k$ is the size of a maximum independent set of $G$.
\end{theorem}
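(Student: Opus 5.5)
The plan is a depth-bounded guessing algorithm. It enumerates short sequences of vertex pairs $(a_1,b_1),\dots,(a_i,b_i)$ with $i\le h$, and for each sequence runs the greedy independent-set heuristic on a candidate set determined by the sequence. The analysis shows two things. If greedy fails at every level along a well-chosen sequence, then the guessed pairs form a semi-induced half-graph of order larger than $h$, which is impossible. Meanwhile, the candidate sets keep a large part of a fixed maximum independent set.

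\emph{Algorithm.} Set $t := k^{1/h}$. Since $k$ is unknown, we do not use it: we simply output the largest set found over all branches. For every sequence of pairs $(a_1,b_1),\dots,(a_i,b_i)$ with $0\le i\le h$ (at most $n^{2h}$ sequences up to lower-order terms), define
\[ C_i := \{v \in V(G) : v \in N(a_j)\setminus N[b_j] \text{ for all } j\le i\} \cup \{b_i\}. \]
The precise handling of $b_i$ itself is a bookkeeping detail to be fixed. We run the greedy algorithm on $G[C_i]$: repeatedly pick a vertex and delete its closed neighbourhood, in $O(n^2)$ time. The output is the largest independent set found over all branches, for total time $O(n^{2h+2})$.

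\emph{Analysis.} Fix a maximum independent set $I$ with $|I|=k$. We build a particular branch inductively while maintaining $|I\cap C_i|\ge k/t^{i}$, starting from $C_0=V(G)$. At level $i$, let $J_i$ be the greedy output on $C_i$. If $|J_i|\ge t$ we are done. Otherwise $J_i$ is a maximal independent set in $G[C_i]$, so $J_i$ dominates $C_i$. Because $I$ is independent, every $I$-vertex not in $J_i$ has a neighbour in $J_i$. By pigeonhole, some $a_{i+1}\in J_i$ has at least $(|I\cap C_i|-|J_i|)/|J_i|\approx k/t^{i+1}$ neighbours in $I\cap C_i$. (Slack of $\pm1$ terms must be handled, possibly by adjusting $t$ slightly or arguing with $|J_i|\le t-1$.)

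Next pick $b_{i+1}$ to be any vertex of $I\cap C_i\cap N(a_{i+1})$. Removing $N[b_{i+1}]$ deletes no vertex of $I$ other than $b_{i+1}$ itself, since $I$ is independent. Hence $C_{i+1}$ still contains essentially $k/t^{i+1}$ vertices of $I$. This is the key point: intersecting with non-neighbourhoods of vertices of $I$ is free.

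\emph{Half-graph extraction.} Every candidate chosen at a later stage lies in $N(a_j)\setminus N(b_j)$ for all earlier $j$. In particular, $b_i\in N(a_1)\cap\dots\cap N(a_i)$. Moreover, $a_{j}\in C_{j-1}$ is non-adjacent to $b_1,\dots,b_{j-1}$. So $a_j$ is adjacent to $b_i$ if and only if $j\le i$. The sets $\{a_j\}$ and $\{b_i\}$ are disjoint, because each $b_i$ is adjacent to $a_i$ and the $a$'s are pairwise distinct greedy picks. Thus after $h+1$ failed levels we would obtain a semi-induced half-graph of order $h+1$, contradicting the definition of $h$.

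Hence greedy succeeds at some level $i\le h$. There, $|I\cap C_i|\ge k/t^{i}$, so a failing greedy would still have to dominate this set, which forces the extraction to continue; the bound $k/t^{h-1}\ge t$ at the last levels is what gives size $t=k^{1/h}$. I expect the main obstacle to be the off-by-one accounting: matching the depth $h$ to the half-graph order and keeping the exponent exactly $1/h$ rather than $1/(h+1)$. I would handle it by stopping one level earlier, using the observation that at level $h-1$ the set $I\cap C_{h-1}$ itself has size at least $t$. With care in the definition of $C_i$, the greedy on it either succeeds or yields the final pair completing the forbidden half-graph.
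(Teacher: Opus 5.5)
Your scheme is the paper's recursion unrolled. Your $a_{i+1}$ (a greedy vertex with many $I$-neighbours) and $b_{i+1}$ (one of those neighbours) are the paper's $v$ and $u$, and your core sets $\bigcap_{j\le i}N(a_j)\setminus N[b_j]$ are its nested graphs $G_{uv}$. The half-graph extraction is sound. The gap is exactly the off-by-one you flag, and your proposed repair does not work. A failing greedy at level $h-1$ only produces the $h$-th pair $(a_h,b_h)$, and $h$ pairs form a half-graph of order $h$, which $G$ may well contain. Carried out honestly, your accounting gives the following. After $h$ pairs the core is edgeless, since an edge $xy$ in it would serve as $(a_{h+1},b_{h+1})$, so greedy returns all of it. But the core is only guaranteed to contain about $k/t^h$ vertices of $I$. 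The argument therefore closes for $t=k^{1/(h+1)}$, not for $t=k^{1/h}$. No bookkeeping fixes this. Take $G=K_{1,m}\sqcup K_{1,m}$ with $m\ge 2$, so $h=1$ and $k=2m$. Greedy on $G$ may return the two centres, and every other set your scheme inspects lies in $N(a_1)\cup\{b_i\}$, so it has size at most $m+1<2m$.

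The paper's written proof has the same hole. Its inductive step bounds the recursive call by $(k^{1-1/h}-1)^{1/(h-1)}$, which is meaningless for $h=1$. There the recursion reaches the edgeless base case, and the argument only yields $\max(|D|,k/|D|)\ge\sqrt{k}$. On the two-star example the paper's algorithm outputs $m$.

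What both arguments need is an exact base case for half-graph index at most $1$. Such a graph has no semi-induced $H_2$, i.e.\ the endpoints of every path on four vertices are adjacent. This forces every component to be complete or complete bipartite: a triangle spreads to the whole component, and otherwise such chords shorten every odd cycle and every shortest path of length $\ge 3$ between the two sides. So these graphs are easy to recognize, and a maximum independent set in them is trivial to compute. Stop the recursion at these graphs instead of at edgeless ones. The depth becomes $h-1$, and for $h\ge 2$ the pigeonhole step yields either $|D|\ge k^{1/h}$ or a branch of size at least $(k^{1-1/h}-1)^{1/(h-1)}+1\ge k^{1/h}$, in time $O(n^{2h})$.
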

\begin{proof}
We proceed by induction on the half-graph index $h$ of $G$.
If $h = 0$, then $G$ has no edges, so~$V(G)$ itself is an independent set and we return it.
Otherwise, the algorithm branches over all \(u,v \in V(G)\) with \(uv \in E(G)\), recursing into the graphs
\[
    G_{uv} := G[N(v) \setminus N[u]],
\]
and recursively computing independent sets \(D_{uv}\) for all \(uv \in E(G)\).
By construction, \(D_{uv} \cup \{u\}\) is also an independent set.
Finally, the algorithm greedily computes an inclusion-maximal independent set~\(D\)
and returns the largest set among \(D\) and \(D_{uv} \cup \{u\}\) for \(uv \in E(G)\). 

\begin{figure}[h]
    \center
    \includegraphics{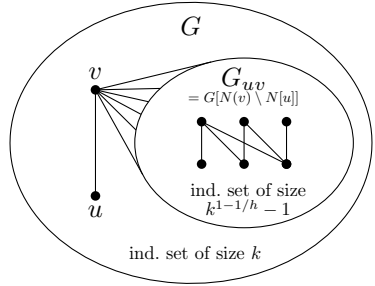}
    \caption{We guess a vertex \(v\) whose neighborhood contains a large part of an independent set and an adjacent vertex \(u\) that is part of the independent set.
    We recurse into the graph \(G_{uv} = G[N(v) \setminus N[u]]\) whose half-graph index has decreased.
    }
    \label{fig:algo}
\end{figure}
As visualized in \Cref{fig:algo},
any semi-induced half-graph in \(G_{uv}\) can be extended by \(u\) and \(v\) to form a semi-induced half-graph of increased order in \(G\).
Hence, \(G_{uv}\) has half-graph index at most \(h-1\),
bounding the recursion depth by \(h\).
At each node, we branch over $O(n^2)$ choices of \(u,v\), and perform $O(n^2)$ work to construct $D$.
Thus, the total running time is $O(n^{2h+2})$.

To argue correctness, assume \(G\) has an independent set \(I\) of size \(k\).
If $|D| \ge k^{1/h}$, then $D$ already achieves the desired bound. 
Note that \(D\) was chosen inclusion-maximal and therefore is a dominating set in \(G\), so the closed neighborhoods \(N[v]\) for \(v \in D\) cover \(I\).
If $|D| < k^{1/h}$, then by the pigeonhole principle there exists $v \in D$ with
\(|N[v] \cap I| \ge |I|/|D| > k^{1-1/h}\).

Since \(|I|/|D| > 1\), the vertex \(v\) has a neighbor in \(I\) and hence \(v \not\in I\).
Thus, since \(v \not\in I\), we have \(|N(v) \cap I| = |N[v] \cap I|\).
There exists \(u \in N(v) \cap I\).
Clearly, \(G_{uv}\) has an independent set \((N(v) \cap I) \setminus \{u\}\) of size at least \(k^{1-1/h} - 1\).
By induction, the recursive call on $G_{uv}$ with half-graph index at most \(h-1\) returns an independent set \(D_{uv}\) of size at least
\[
    \bigl(k^{1-1/h} - 1\bigr)^{1/(h-1)}.
\]
Using the inequality $(x - 1)^a \geq x^a - 1$ for $x \geq 1$ and $0 < a \leq 1$, this is at least
\[
    \bigl(k^{1-1/h}\bigr)^{1/(h-1)} - 1 = k^{1/h} - 1.
\]
After adding back \(u\), the set \(D_{uv} \cup \{u\}\) has the desired lower bound \(k^{1/h}\),
proving the statement.
\end{proof}

\section{FPT Independent Set for semi-ladder-free classes}
\label{sec:is-semiladder-FPT}

In this section, we prove that \textsc{Independent Set} is fixed-parameter tractable on graph classes that have both bounded half-graph index and bounded co-matching index. 
Our FPT algorithm is based on a reduction rule, computable in FPT time, that, when applied exhaustively, yields an equivalent bounded-size instance, which we then solve by brute-force. 
The reduction rule is based on the extraction of \emph{indiscernible sequences}, which interact in a highly regular way with the rest of the graph.
As our key lemma, we prove the following.

\begin{restatable}{lemma}{Homset}\label{cor:homset-neighborhood-structure-k}

There exist (computable) functions $f,g\colon \N^2 \to\N$ such that for every $t\in \N$ the following holds.

\smallskip\noindent
Let $G$ be an $n$-vertex graph with half-graph index $<t$ and co-matching index $<t$.
For every $L\in \N$, if $n\ge f(t,L)$, then~$G$ contains a set $S\subseteq V(G)$ of size $|S|=L$ such that
\begin{enumerate}
\item $S$ is either a clique or an independent set, and
\item for every vertex $w\in V(G)\setminus S$ we have
\[
|N(w)\cap S|<2t
\qquad\text{or}\qquad
S\subseteq N(w).
\]
\end{enumerate}
Moreover, such a set $S$ can be computed in time
$g(t,L)\cdot n$.
\end{restatable}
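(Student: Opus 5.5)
We build $S$ by iterated Ramsey extraction, maintaining a linearly ordered candidate sequence and classifying outside vertices by how they cut it. The lemma claims time $g(t,L)\cdot n$, so the extraction has to touch only a bounded number of vertices, about $f(t,L)$ many. Neighbourhoods of outside vertices, however, must be handled against all $n$ vertices, and this tension shapes the plan.

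\textbf{Step 1 (Ramsey).} Take an arbitrary set $V_0$ of $f(t,L)$ vertices. By Ramsey's theorem, find inside $V_0$ a clique or an independent set $S_0=(s_1,\dots,s_M)$ with $M$ huge, and fix an arbitrary order on it. This costs $g(t,L)$ time.

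\textbf{Step 2 (structure of a single vertex's trace).} Fix $w\notin S_0$ and consider the $0/1$ word $N(w)\cap S_0$ along the sequence. We want to show that for every $w$ the trace either has fewer than $2t$ ones or fewer than $2t$ zeros. This is exactly the point where the two excluded patterns enter.
\begin{itemize}
\item Suppose $w$ has many neighbours and many non-neighbours in $S_0$. Then among $s_1,\dots,s_M$ we can pick pairs. With a single $w$ this produces no half-graph by itself.
\item The patterns must therefore come from many outside vertices interacting with $S_0$. Among the outside vertices that are ``mixed'' on $S_0$ (at least $2t$ neighbours and at least $2t$ non-neighbours), we argue as follows. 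Using a second Ramsey-style pass, in the spirit of the indiscernible-sequence extraction of Dreier et al., we refine $S_0$ to a subsequence $S_1$ such that every mixed vertex's trace on $S_1$ is an interval or a co-interval with boundedly many alternations.
\item Many distinct boundary positions across outside vertices yield a half-graph between those vertices and $S_1$.
\item Many vertices missing exactly one element yield a co-matching.
\end{itemize}

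\textbf{Step 3 (counting and pruning).} Bounded half-graph index $<t$ implies that only fewer than $t$ ``cut types'' occur among the mixed vertices: otherwise, choosing one vertex per cut position gives $H_t$. Each such cut type consists of an initial segment minus or plus a bounded number of exceptions. Bounded co-matching index $<t$ limits, to fewer than $t$, the exceptional points where a near-full vertex misses $S$.

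Hence the set of positions of $S_1$ touched by some cut point or exception is small, about $t\cdot O(t)$ positions counted with multiplicity over types. Removing them, or taking a long block between consecutive cut points of length at least $L$ (possible since $M$ is huge), leaves $S$ on which every mixed vertex is constant up to fewer than $2t$ deviations. The deviations should in fact collapse to near-empty or full. Constant-empty-up-to-exceptions then means $|N(w)\cap S|<2t$. For the full side, we additionally delete the fewer than $t$ co-matching exception points per type, but there are possibly unboundedly many types of exceptions.

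\textbf{Main obstacle.} The hardest step is making ``full'' exact: $S\subseteq N(w)$ rather than ``almost full''. The point is that vertices missing few elements of $S$ must, collectively, miss only a bounded set of positions. I would argue this as follows. If the missed positions spread over many elements $s_{i_1},\dots,s_{i_t}$, each witnessed by a vertex $w_j$ missing $s_{i_j}$ but adjacent to the other chosen ones, this is a semi-induced co-matching of order $t$. To get the ``but adjacent to the others'' property, Ramsey-pass on the witnesses, using that each witness misses fewer than $2t$ elements.

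So the union of missed positions is coverable by a bounded family, and we delete a bounded set. The running time $g(t,L)\cdot n$ comes from the following: only the bounded candidate set is processed in the exponential part, and the final verification and pruning scans each outside vertex's adjacency to the bounded set once. This needs adjacency queries in constant time, or a single linear pass over edges.
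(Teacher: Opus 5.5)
There is a genuine gap in Steps~2--3. The heart of the lemma is showing that, on a suitable $S$, every $w\notin S$ has fewer than $2t$ neighbours or fewer than $2t$ non-neighbours, and your plan gives no mechanism for this. You correctly note that a single mixed vertex produces no half-graph, but you never explain how to obtain $t$ outside vertices from it. Your ``second Ramsey-style pass'' is unspecified, and Ramsey cannot be run vertex by vertex because there are unboundedly many outside vertices.

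The missing idea is to colour tuples of the sequence by boundedly many \emph{existential} properties. The paper makes the sequence indiscernible for $\chi_{2t}$ and $\chi^\star_{2t}$, i.e.\ ``some $y$ outside the tuple is adjacent to its first $t$ entries and non-adjacent to its last $t$''. A single mixed $w$ witnesses $\chi_{2t}$ on one increasing $2t$-tuple. By indiscernibility, every window $(v_i,\dots,v_{i+2t-1})$ with $i\in[t]$ then has a witness $y_i$, and $y_1,\dots,y_t$ against $v_t,\dots,v_{2t-1}$ form $H_t$. Hence no vertex is mixed at all, and no pruning is needed.

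What you use instead is false. ``Many distinct boundary positions yield a half-graph'' fails once traces are intervals rather than prefixes. Take a disjoint union of many stars $K_{1,4t}$; it has half-graph index $1$ and co-matching index $2$, so it is admissible for $t\ge 3$. Let $S_0$ be the leaves listed star by star. Then:
\begin{itemize}
\item every centre is mixed, and the traces are already pairwise disjoint intervals, as your Step~2 demands;
\item there are unboundedly many distinct cut points, yet not even an $H_2$;
\item removing the cut positions still leaves each centre with $4t-2\ge 2t$ neighbours;
\item for $L>4t$, no block between consecutive cut points is long enough.
\end{itemize}
A valid $S$ here takes one leaf per star. That is a sparse subsequence, which neither of your remedies produces but indiscernibility forces.

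Your last step is correct and differs from the paper. Once the weak dichotomy holds, each near-full vertex misses fewer than $2t$ points of $S$. On the union $U$ of missed points, join $u,u'$ if the witness chosen for one misses the other; this conflict graph has average degree below $4t$. So if $|U|\ge 4t^2$, it contains $t$ pairwise non-conflicting points whose witnesses are distinct and form a co-matching of order $t$. Deleting $U$ is harmless because $S$ is homogeneous. This elementary argument can replace the paper's $\theta_{2t},\delta_{2t},\delta^\star_{2t}$ indiscernibility, but only on top of the dichotomy you have not established.
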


We defer the proof of \Cref{cor:homset-neighborhood-structure-k} to the next subsection and continue with the presentation of the algorithm.
Crucially, \Cref{cor:homset-neighborhood-structure-k} will allow us to apply the following kernelization rule. 

\begin{lemma}\label{lem:delete-from-clique}
Let $G$ be a graph, let $S\subseteq V(G)$ be a clique, and let $q\in\mathbb N$.
Assume that for every $v\in V(G)\setminus S$
\begin{equation}\label{eq:dagger-few-or-all}
|N(v)\cap S|\le q
\qquad\text{or}\qquad
S\subseteq N(v).
\tag{$*$}
\end{equation}
Let $k\in\mathbb N$ and assume $|S|\ge (k-1)q+2$.
Then for every $s\in S$ the instances $(G,k)$ and $(G-s,k)$ are equivalent for \textsc{Independent Set}.
\end{lemma}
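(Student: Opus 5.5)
One direction is immediate. Any independent set of $G-s$ is an independent set of $G$, so if $(G-s,k)$ is positive, then so is $(G,k)$. For the converse, assume $G$ has an independent set $I$ with $|I|=k$. If $s\notin I$, then $I$ already lies in $G-s$ and there is nothing to prove. So the plan is to treat the case $s\in I$: there we exchange $s$ for a different vertex $s'\in S\setminus\{s\}$ and obtain an independent set $I' := (I\setminus\{s\})\cup\{s'\}$ of size $k$ that avoids $s$.

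First we note that $S$ is a clique, so $I\cap S=\{s\}$. Consequently every other vertex $v\in I\setminus\{s\}$ lies in $V(G)\setminus S$ and satisfies \eqref{eq:dagger-few-or-all}. No such $v$ can satisfy $S\subseteq N(v)$: that would give $v\in N(s)$, contradicting the independence of $I$ (which contains both $s$ and $v$). Hence each of the at most $k-1$ vertices $v\in I\setminus\{s\}$ has $|N(v)\cap S|\le q$.

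Next we count. Together, the vertices of $I\setminus\{s\}$ are adjacent to at most $(k-1)q$ vertices of $S$. The set $S\setminus\{s\}$ has at least $(k-1)q+1$ elements, so some $s'\in S\setminus\{s\}$ is adjacent to no vertex of $I\setminus\{s\}$. Moreover $s'\notin I$, because $I\cap S=\{s\}$. It follows that $I'$ is independent, has size $k$, and avoids $s$, so $(G-s,k)$ is positive.

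There is no real obstacle here. The one point to check is that the ``all of $S$'' alternative in \eqref{eq:dagger-few-or-all} is excluded by adjacency to $s$, which is exactly why $s\in I$ is the case we argue about.
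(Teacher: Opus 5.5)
Your proof is correct and follows essentially the same route as the paper. In the case $s\in I$, you note that every other vertex of $I$ is non-adjacent to $s$, hence not complete to $S$, hence has at most $q$ neighbors in $S$; counting then yields a replacement $s'\in S\setminus\{s\}$, and the bound $|S|\ge (k-1)q+2$ plays the same role as in the paper.
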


\begin{proof}
Fix $s\in S$.
Any independent set of size $k$ in $G-s$ is also an independent set of size $k$ in $G$.

Conversely, let $I$ be an independent set of size $k$ in $G$.
If $s\notin I$, then $I$ is an independent set of size $k$ in $G-s$ and we are done.
Hence, assume $s\in I$. 
Since $S$ is a clique, we have $I\cap S=\{s\}$.
Let $J:=I\setminus\{s\}$. Then $|J|=k-1$ and $J\cap S=\emptyset$.

For each $x\in J$ we have $xs\not\in E(G)$, hence $x$ is not complete to $S$.
By \eqref{eq:dagger-few-or-all}, it follows that $|N(x)\cap S|\le q$.
Define
\[
F := \bigcup_{x\in J}\bigl(N(x)\cap S\bigr).
\]

Then $|F|\le (k-1)q$.
Since $|S|\ge (k-1)q+2$, there exists $s'\in S\setminus F$ with $s'\neq s$.
For every $x\in J$ we have $s'\notin N(x)\cap S$, hence $xs'\not\in E(G)$.
Therefore $J\cup\{s'\}$ is an independent set of size $k$ in~$G-s$.
\end{proof}

It is now easy to derive an FPT algorithm. 

\begin{theorem}\label{thm:IS-kernel}
There exists a computable function $H(t,k)$ such that for every $t\in \N$ the following holds. 
Given an $n$-vertex graph $G$ with half-graph index $<t$ and co-matching index $<t$, and an integer $k$,
one can decide whether $G$ contains an independent set of size at least $k$ in time $H(t,k)\cdot n^2$.
\end{theorem}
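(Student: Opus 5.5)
We will iterate the key lemma \Cref{cor:homset-neighborhood-structure-k} together with the reduction rule \Cref{lem:delete-from-clique}. Set $q := 2t-1$ and $L := (k-1)q+2$. Note that the class of graphs with half-graph index $<t$ and co-matching index $<t$ is closed under vertex deletion, since semi-induced subgraphs of $G-s$ are semi-induced in $G$. So both hypotheses are preserved throughout the algorithm.

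The algorithm runs as follows. While the current graph $G$ has at least $f(t,L)$ vertices, compute in time $g(t,L)\cdot n$ a set $S$ of size $L$ as guaranteed by \Cref{cor:homset-neighborhood-structure-k}. There are two cases.
\begin{itemize}
\item If $S$ is an independent set, then $|S| = L \ge k$ whenever $k\ge 1$ (since $L\ge k+1$ for $q\ge 1$, and $L=2$ for $k=1$). Hence $G$ contains an independent set of size $k$, and we answer yes immediately.
\item If $S$ is a clique, then every $w\notin S$ satisfies $|N(w)\cap S|<2t$, i.e.\ $|N(w)\cap S|\le q$, or $S\subseteq N(w)$. This is exactly condition \eqref{eq:dagger-few-or-all}. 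Since $|S|\ge (k-1)q+2$, \Cref{lem:delete-from-clique} lets us delete an arbitrary $s\in S$ and obtain an equivalent instance $(G-s,k)$.
\end{itemize}
Each iteration removes one vertex, so there are at most $n$ iterations. Each costs $g(t,L)\cdot n$ plus $O(n)$ bookkeeping for the deletion, giving $g(t,L)\cdot n^2$ overall.

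Once fewer than $f(t,L)$ vertices remain, we decide the instance by brute force over all $k$-subsets. This costs at most $2^{f(t,L)}\cdot k^2$, a function of $t$ and $k$ only. Setting $H(t,k) := g(t,(k-1)(2t-1)+2) + 2^{f(t,(k-1)(2t-1)+2)}\cdot k^2 + O(1)$ gives the claimed bound. The trivial cases $k\le 0$ (always yes) and $k > n$ (always no) are handled up front.

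The substantive content lies entirely in \Cref{cor:homset-neighborhood-structure-k}, which we take as given. What remains to check carefully is:
\begin{itemize}
\item Every application of \Cref{lem:delete-from-clique} gets a fresh $S$ computed in the current graph, so condition \eqref{eq:dagger-few-or-all} refers to the current vertex set.
\item $H$ is computable, which follows because $f$ and $g$ are computable.
\end{itemize}
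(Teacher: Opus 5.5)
Your proposal is correct and follows the paper's proof essentially step for step. You use the same parameters $q=2t-1$ and $(k-1)q+2$, and the same loop: apply the key lemma, answer yes if $S$ is independent, otherwise delete a clique vertex via the reduction lemma. You then brute-force the bounded-size remainder within the same $g\cdot n^2$ time accounting. Your explicit checks that the class is closed under vertex deletion and that $|S|\ge k$ in the independent case are points the paper leaves implicit.
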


\begin{proof}
Let $(G,k)$ be an instance. 
Set $q:=2t-1$, let $f$ be the function from \cref{cor:homset-neighborhood-structure-k}, and define
\[
K \ :=\ (k-1)q+2,
\qquad
N_0 \ :=\ f(t,K).
\]

We apply our reduction exhaustively:
If $|V(G)|\ge N_0$, compute (using \cref{cor:homset-neighborhood-structure-k} with parameter $K$) a set $S\subseteq V(G)$ of size $|S|= K$ satisfying items~(1) and (2) of the lemma.
\begin{itemize}
\item If $S$ is an independent set, then answer yes and stop.
\item Otherwise, $S$ is a clique. Delete an arbitrary vertex $s\in S$ from $G$ and iterate. 
By \cref{lem:delete-from-clique} the instances $(G,k)$ and $(G-s,k)$ are equivalent.
\end{itemize}

Each application of the reduction rule deletes one vertex from $G$.
Hence after at most $n$ iterations the process terminates with a graph (still denoted by $G$) of size $|V(G)|<N_0$. 
At termination we output the current instance~$(G,k)$.
Since $|V(G)|<N_0=f(t,K)$ and $K=(k-1)(2t-1)+2$, this is an equivalent instance whose size depends only on $t$ and $k$.

In each iteration, we invoke \cref{cor:homset-neighborhood-structure-k} with parameter $K$ on the current graph
on $n_i$ vertices. This takes time $\Oof\bigl(g(t,K)\cdot n_i\bigr)$, where $g$ is the second function from the lemma.
All other work in the iteration is linear in $n_i$.
Summing over all iterations, and using $n_i\le n$ for all $i$, we obtain the
running time 
\[
\sum_i \Oof\bigl(g(t,K)\cdot n_i\bigr)\ =\ \Oof\bigl(g(t,K)\cdot n^2\bigr).
\]

On the resulting instance we solve the problem by brute-force and define $H(t,k)$ accordingly. 
\end{proof}

In the remainder of this section we will prove \Cref{cor:homset-neighborhood-structure-k}.

\subsection{Indiscernible sequences}

We assume basic familiarity with first-order logic, as can be found, e.g., in the textbooks~\mbox{\cite{Hodges,TentZiegler12}}.
% For a first-order formula $\varphi$, we denote by $\free(\varphi)$ the set of its \emph{free variables}, that is, those variables that are not bound by a quantifier. 
% Accordingly, writing $\varphi(x_1,\ldots,x_\kappa)$ indicates that $\free(\varphi)\subseteq\{x_1,\ldots,x_\kappa\}$.
%The \emph{quantifier rank} $\qr(\varphi)$ is the nesting depth of quantifiers in $\varphi$.
%Finally, the \emph{length} $|\varphi|$ is the number of symbols of $\varphi$ in some fixed reasonable encoding of formulas (hence well-defined up to constant factors), and we use it only as a coarse syntactic size measure.
Indiscernible sequences are a central tool in model theory. Let us recall their definition. 
Let $G$ be a graph, let $\phi(x_1,\ldots,x_\kappa)$ be a formula in first-order logic
and let $I=(a_i)_{i\in[n]}$ be a sequence of vertices of~$G$.
We say that $I$ is \emph{$\phi$-indiscernible} if for all indices
\[
1 \le i_1 < \cdots < i_\kappa \le n
\quad\text{and}\quad
1 \le j_1 < \cdots < j_\kappa \le n
\]
we have
\[
G \models \phi(a_{i_1},\ldots,a_{i_\kappa})
\quad\Longleftrightarrow\quad
G \models \phi(a_{j_1},\ldots,a_{j_\kappa}).
\]
For a finite set $\Delta$ of formulas we say that $I$ is \emph{$\Delta$-indiscernible}
if it is $\phi$-indiscernible for all $\phi\in \Delta$.
In the following, we will sometimes treat an indiscernible sequence as a set, e.g., when considering the induced subgraph $G[I]$ or when intersecting with other sets. 

\begin{example}\label{obs:clique-or-independent}
Let $G$ be a graph and let $\eta(x,y):=E(x,y)$.
If $I=(v_1,\dots,v_n)$ is an $\eta$-indiscernible sequence in $G$, then the induced subgraph $G[I]$ is either a clique or an independent set.
\end{example}

By Ramsey's Theorem~\cite{ramsey}, every sufficiently long sequence of vertices contains a long indiscernible subsequence,
as formalized by the following lemma.
(See \cite[Thm.\ 4.18]{jukna2011extremal} for a modern presentation.)

\begin{lemma}\label{lem:indiscernibles-exist} 
There exists a (computable) function $f$ such that for every finite set of formulas $\Delta$ and every sequence of elements of length at least $f(k,m)$, the sequence contains a $\Delta$-indiscernible subsequence of length~$m$, where
\[
    k := |\Delta| \cdot \text{maximum number of free variables appearing in a formula of $\Delta$}.
\]
\end{lemma}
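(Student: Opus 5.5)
We prove \Cref{lem:indiscernibles-exist} by applying the finite hypergraph Ramsey theorem once per formula of $\Delta$. Each application shrinks the current subsequence to one on which that formula is indiscernible. Write $\Delta=\{\phi_1,\ldots,\phi_p\}$, where $\phi_\ell$ has $\kappa_\ell$ free variables, and let $r:=\max_\ell \kappa_\ell$; then $p\le k$ and $r\le k$. For a formula with $\kappa_\ell=0$ there is nothing to do, since a sentence is trivially indiscernible on every sequence. Formulas with fewer than $r$ free variables are handled directly with their own arity $\kappa_\ell$, so no padding to a common arity is needed.

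The key step is a single formula. Let $\phi(x_1,\ldots,x_\kappa)$ be given and let $J=(a_i)_{i\in[N]}$ be a sequence. Colour every $\kappa$-element subset $\{i_1<\cdots<i_\kappa\}\subseteq[N]$ red or blue according to whether $G\models\phi(a_{i_1},\ldots,a_{i_\kappa})$. Because a $\kappa$-subset of indices determines its increasing enumeration uniquely, this is a well-defined $2$-colouring of $\binom{[N]}{\kappa}$. By Ramsey's theorem for $\kappa$-uniform hypergraphs, if $N\ge R_\kappa(M)$, the $2$-colour Ramsey number, there is a monochromatic index set $T$ with $|T|=M$. The subsequence of $J$ indexed by $T$, taken in increasing order, is then $\phi$-indiscernible: any two increasing $\kappa$-tuples of indices from $T$ receive the same colour.

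Next we iterate. Define $m_p:=m$ and $m_{\ell-1}:=R_{\kappa_\ell}(m_\ell)$, and set $f(k,m)$ to be the value obtained by this recursion with all $\kappa_\ell$ replaced by $k$ and $p$ replaced by $k$. This is an upper bound by monotonicity of Ramsey numbers in both the arity and the target size, and it is computable because Ramsey numbers have explicit computable upper bounds. Starting from a sequence of length at least $f(k,m)$, we successively extract subsequences that are $\phi_1$-, then $\phi_2$-, \ldots, then $\phi_p$-indiscernible.

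The one point requiring care, which I expect to be the only real obstacle, is that later extractions preserve earlier indiscernibility. They do, because each step passes to a subsequence with the induced order. Any increasing tuple of the subsequence is an increasing tuple of the earlier sequence, so indiscernibility for the earlier formulas is inherited by every subsequence. The final subsequence therefore has length $m$ and is $\Delta$-indiscernible.
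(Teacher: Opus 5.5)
Your strategy is the standard argument the paper relies on: apply the $\kappa$-uniform hypergraph Ramsey theorem once per formula, and use that indiscernibility is inherited by subsequences. The paper gives no proof and only cites Ramsey's theorem. Your single-formula step and the iteration are correct.

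The gap is the last step, where you make $f$ depend only on $k$ by appealing to monotonicity of Ramsey numbers in the arity. That monotonicity is false. A $2$-element set contains only one pair, so $R_2(2)=2$, whereas $R_1(2)=3$. More generally, $R_\kappa(M)=M$ for $M\le\kappa$, while $R_1(M)=2M-1$.

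This breaks your $f$ concretely. Take $\Delta=\{\phi_1(x),\phi_2(x,y)\}$, so $k=4$. Your recursion yields $f(4,2)=R_4(R_4(R_4(R_4(2))))=2$. But a sequence $(a_1,a_2)$ with $G\models\phi_1(a_1)$ and $G\not\models\phi_1(a_2)$ has no $\Delta$-indiscernible subsequence of length $2$.

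The repair is routine; either of the following works.
\begin{itemize}
\item Define $f(k,m)$ as the maximum of your recursion over the finitely many arity profiles $(\kappa_1,\dots,\kappa_p)$ with $p\le k$ and all $\kappa_\ell\le k$. This is still computable.
\item Use the shifted inequality $R_\kappa(M)\le R_{\kappa+1}(M+1)$. Given a $2$-colouring $c$ of the $\kappa$-subsets of $[N]$, colour each $(\kappa+1)$-subset $S$ by $c(S\setminus\{\max S\})$. Deleting the maximum element of a monochromatic $(M+1)$-set leaves a $c$-monochromatic $M$-set. Iterating, and using monotonicity in the target size, gives $R_{\kappa_\ell}(M)\le R_k(M+k)$ whenever $1\le\kappa_\ell\le k$. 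So the recursion $m_{\ell-1}:=R_k(m_\ell+k)$, run for $k$ steps, is a valid uniform bound.
\end{itemize}
With this change your proof is complete.
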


% The computability of the function follows by the fact that the proof in~\cite{EhrenfeuchtMostowski56} is constructive (see also the algorithm of~\cite[Theorem~15]{DBLP:journals/talg/KreutzerRS19}). 
% In our case we can actually settle with a simple brute-force method to extract sequences with our desired properties, as we will present below. 
% All we need is that the bound~$f(m)$ is computable. 

\subsection{Neighborhood dichotomies from indiscernibles}

It was observed in multiple places that, when working in well-behaved graph classes and choosing the right formula sets, indiscernible sequences can impose a lot of structure on their neighborhoods~\cite{blumensath2011,KreutzerRS19,MalliarisShelah2014RegularityStableGraphs, dreier_et_al23,dreier2022combinatorial,dreier2024flip}.
We will prove a lemma in this spirit for graph classes that are half-graph-free and co-matching-free (\Cref{lem:Chi-complete-uniform}).
A formula set that almost satisfies our needs was used by Malliaris and Shelah~\cite{MalliarisShelah2014RegularityStableGraphs} to establish a strong regularity lemma for well-behaved graph classes. 
We will work with a formula set that is slightly different from that set, since their definition of half-graph does not require that the vertices $a_1,\ldots, a_t, b_1,\ldots, b_t$ forming the half-graph are pairwise different. 
We have to incorporate this assumption into our formulas. 
Let 
\begin{align*}
\chi_{2t}(x_1,\dots,x_{2t})
\ :=\
\exists y\Bigl(\ \bigwedge_{1\le i\le t}E(x_i,y)\ \wedge\ \bigwedge_{t+1\le i\le 2t}\neg E(x_i,y)\ \wedge \bigwedge_{1\leq i\leq 2t} y\neq x_i\Bigr),
\\
\chi_{2t}^\star(x_1,\dots,x_{2t})
\ :=\
\exists y\Bigl(\ \bigwedge_{1\le i\le t}\neg E(x_i,y)\ \wedge\ \bigwedge_{t+1\le i\le 2t}E(x_i,y)\ \wedge \bigwedge_{1\leq i\leq 2t} y\neq x_i\Bigr).
\end{align*}

The formula $\chi_{2t}$ states that there exists a vertex $y$ different from all $x_i$ that is adjacent to exactly the first $t$ elements $x_1,\ldots, x_t$ and not to the last $t$ elements $x_{t+1},\ldots, x_{2t}$. 
The formula~$\chi_{2t}^\star$ symmetrically states that $y$ is adjacent to the last $t$ and not adjacent to the first $t$ elements $x_i$.
See \Cref{fig:formulas} for an illustration.

\begin{figure}[htbp]
    \centering
    \includegraphics[width = \textwidth]{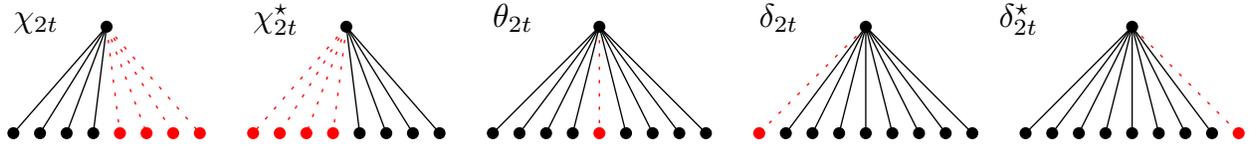}
    \caption{The connection patterns of the vertex $y$ (on the top) towards a tuple from the indiscernible sequence (on the bottom) that is verified by the formulas $\chi_{2t}, \chi_{2t}^\star, \theta_{2t},\delta_{2t}, \delta^\star_{2t}$ where $t = 4$.}
    \label{fig:formulas}
\end{figure}

We obtain the following variation of~\mbox{\cite[Claim 3.2]{MalliarisShelah2014RegularityStableGraphs}}.

\begin{lemma}\label{lem:Gamma-dichotomy}
Let $G$ be a graph with half-graph index $<t$. 
Let $I=(v_1,\dots,v_n)$ be an $\{\chi_{2t},\chi_{2t}^\star\}$-indiscernible sequence in $G$ of length $n\ge 4t$.
Then for every vertex $w\in V(G)\setminus I$,
\[
|N(w)\cap I|<2t
\qquad\text{or}\qquad
|I\setminus N(w)|<2t.
\]
\end{lemma}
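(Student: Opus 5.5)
Assume for contradiction that some $w\in V(G)\setminus I$ has $|N(w)\cap I|\ge 2t$ and $|I\setminus N(w)|\ge 2t$. The plan is to use $w$ as a witness for $\chi_{2t}$ or $\chi_{2t}^\star$ on one tuple, transfer this by indiscernibility to every increasing $2t$-tuple, and then pick the tuples so that the resulting witnesses form a semi-induced half-graph of order $t$, contradicting half-graph index $<t$.

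\textbf{Step 1: a single witness tuple.} Since $w$ has at least $2t$ neighbours and at least $2t$ non-neighbours in $I$, we can find $t$ neighbours $a_1<\dots<a_t$ and $t$ non-neighbours $b_1<\dots<b_t$ (in the order of $I$) that appear in one of two configurations. Either all chosen neighbours precede all chosen non-neighbours, or the reverse. To see this, let $p$ be the position of the $t$-th neighbour of $w$ along $I$. If at least $t$ non-neighbours lie after $p$, the first configuration holds. Otherwise at least $t$ non-neighbours lie before $p$. In that case, consider the position of the $t$-th non-neighbour and argue symmetrically, using that there are $\ge 2t$ neighbours in total. In the first configuration $w$ witnesses $\chi_{2t}(a_1,\dots,a_t,b_1,\dots,b_t)$, and in the second it witnesses $\chi_{2t}^\star(b_1,\dots,b_t,a_1,\dots,a_t)$. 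The condition $y\ne x_i$ holds because $w\notin I$. By $\{\chi_{2t},\chi_{2t}^\star\}$-indiscernibility, the same formula then holds for \emph{every} increasing $2t$-tuple of $I$. By the complement symmetry of half-graphs, it suffices to treat the $\chi_{2t}$ case.

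\textbf{Step 2: building the half-graph.} The aim is to use witnesses of $\chi_{2t}$ on cleverly chosen increasing tuples. Take $t$ elements $c_1<\dots<c_t$ of $I$ that are spaced out, with enough padding elements of $I$ between and around them. For each $j\in[t]$, form an increasing $2t$-tuple whose first $t$ entries include $c_1,\dots,c_j$ plus padding elements placed before $c_{j+1}$, and whose last $t$ entries include $c_{j+1},\dots,c_t$ plus padding after them. A witness $y_j$ of $\chi_{2t}$ on this tuple is then adjacent to $c_i$ exactly when $i\le j$. So $(c_i)$ and $(y_j)$ semi-induce a half-graph of order $t$.

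The length $n\ge 4t$ is what makes the padding possible. For instance, index the $c_i$ as every other element of a suitable block, so that each tuple can be completed with filler while keeping the increasing order. Checking this is the main bookkeeping step, and I expect to use $4t$ elements carefully.

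\textbf{Step 3: distinctness, the main obstacle.} We must ensure that the $y_j$ are pairwise distinct and distinct from all $c_i$. Distinctness from the $c_i$ comes for free from the clause $y\neq x_i$, since every $c_i$ appears in every tuple. Distinctness among the $y_j$ holds because $y_j$ and $y_{j'}$ (with $j<j'$) disagree on $c_{j'}$. The $y_j$ may lie in $I$, which is fine for a semi-induced half-graph as long as all $2t$ vertices are distinct.

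The resulting half-graph of order $t$ contradicts the assumption that $G$ has half-graph index $<t$, so no such $w$ exists.
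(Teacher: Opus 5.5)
Your proof is correct and follows the paper's route: you transfer the witness $w$ to every increasing $2t$-tuple by indiscernibility, take witnesses $y_j$ on tuples that split $c_1,\dots,c_t$ after position $j$, and get distinctness from the $y\neq x_i$ clause and from the differing neighbourhoods. The paper's sliding windows $(v_i,\dots,v_{i+2t-1})$ with $c_j=v_{t+j-1}$ are exactly your padding scheme with the $c_j$ consecutive, using $t-1$ padding elements before and $t$ after, so no spacing and only $3t-1$ elements are needed; your Step 1 is a valid variant of the paper's argument via the prefix of length $2t$.

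One correction: reduce to the $\chi_{2t}$ case by reversing the order of $I$, which turns $\chi_{2t}^\star$ on all increasing tuples into $\chi_{2t}$, not by complementation. Complementing a half-graph of order $t$ only yields one of order $t-1$, which would not contradict half-graph index $<t$.
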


\begin{proof}
Fix $w\in V(G)\setminus I$ and assume for contradiction that
\begin{equation}\label{eq:Gamma-both-large-new}
|N(w)\cap I|\ge 2t
\qquad\text{and}\qquad
|I\setminus N(w)|\ge 2t.
\end{equation}
Let $A:=(v_1,\dots,v_{2t})$ be the prefix of $I$ of length $2t$.
By~\eqref{eq:Gamma-both-large-new}, at least one of the two sets
$A\cap N(w)$ and $A\setminus N(w)$ has size at least $t$.
We assume without loss of generality that
\begin{equation}\label{eq:Gamma-wlog-neigh}
|A\cap N(w)|\ge t,
\end{equation}
otherwise we argue symmetrically using $\chi_{2t}^\star$ instead of $\chi_{2t}$.

Since the second inequality in~\eqref{eq:Gamma-both-large-new} guarantees at least $2t$ non-neighbors of $w$ in the whole sequence,
and $A$ has length $2t$, we can choose indices
$1\le i_1<\cdots<i_t\le 2t$ and $2t< j_1<\cdots< j_t\le n$
such that $v_{i_1},\dots,v_{i_t}\in N(w)$ and $v_{j_1},\dots,v_{j_t}\notin N(w)$.
Hence
\[
G\models \chi_{2t}(v_{i_1},\dots,v_{i_t},v_{j_1},\dots,v_{j_t})
\]
with witness $y=w$. 
Note that $w\notin I$, so the conjunct $\bigwedge_{1\le i\le 2t}y\neq x_i$ is satisfied as well.

Since $I$ is $\chi_{2t}$-indiscernible, it follows that every increasing $2t$-tuple from $I$ satisfies $\chi_{2t}$.
In particular, for every $i\in[t]$ we have
\begin{equation}\label{eq:Gamma-sliding}
G\models \chi_{2t}(v_i,v_{i+1},\dots,v_{i+2t-1}).
\end{equation}
Fix, for each $i\in[t]$, a witness $y_i\in V(G)$ for~\eqref{eq:Gamma-sliding}.
Then
\[
y_i v_{i+j}\in E(G)\ \ \text{for}\ \ 0\le j\le t-1,
\qquad\text{and}\qquad
y_i v_{i+j}\notin E(G)\ \ \text{for}\ \ t\le j\le 2t-1,
\]
and moreover $y_i\neq v_{i+j}$ for all $0\le j\le 2t-1$ by the conjunct $\bigwedge_{1\le i\le 2t}y\neq x_i$.

For $j\in[t]$ define $r_j:=v_{t+j-1}$, so $(r_1,\dots,r_t)=(v_t,\dots,v_{2t-1})$.
As $y_i$ is adjacent precisely to the first $t$ vertices of the window $(v_i,\dots,v_{i+2t-1})$, i.e.\ to
$v_i,\dots,v_{i+t-1}$, and $r_j=v_{t+j-1}$, we have $r_jy_i\in E(G)$ if and only if
\[
r_j\in\{v_i,\dots,v_{i+t-1}\}
\ \ \Leftrightarrow\ \ 
t+j-1 \le i+t-1
\ \ \Leftrightarrow\ \ 
j\le i.
\]

Observe that the $y_i$ are distinct from the $r_j$ by construction, pairwise different since they have different neighborhoods, and the $v_i$ are pairwise distinct as they are different vertices of $I$. 
Hence, $(r_1,\dots,r_t)$ and $(y_1,\dots,y_t)$ witness a half-graph of order~$t$ (up to swapping the sides),
contradicting the assumption that the half-graph index of $G$ is $<t$.
This contradiction shows that~\eqref{eq:Gamma-both-large-new} cannot hold, and the lemma follows.
\end{proof}

We add the assumption of bounded co-matching index and consider indiscernibility with respect to further formulas. Let
\begin{align*}
\theta_{2t}(x_1,\dots,x_{2t+1}) 
\ :=\ &
\exists y\Bigl(\ \bigwedge_{1\le i\le t}E(x_i,y)\ \wedge\ \neg E(x_{t+1},y)\ \wedge\ \bigwedge_{t+2\le i\le 2t+1}E(x_i,y)\ \wedge \bigwedge_{1\leq i\leq 2t+1} y\neq x_i\Bigr),\\
\delta_{2t}(x_1,\ldots, x_{2t+1}) \ :=\ & 
\exists y\Bigl(\ \neg E(x_1,y) \wedge \bigwedge_{2\le i\le 2t+1}E(x_i,y)\wedge \bigwedge_{1\le i\le 2t+1}y\neq x_i\Bigr), \\
\delta_{2t}^\star(x_1,\ldots, x_{2t+1}) \ :=\ & 
\exists y\Bigl(\ \neg E(x_{2t+1},y) \wedge \bigwedge_{1\le i\le 2t}E(x_i,y)\wedge \bigwedge_{1\le i\le 2t+1}y\neq x_i\Bigr), 
\end{align*}

The formula $\theta_{2t}$ states that there exists an element $y$ that is different from all $x_i$ and adjacent to all $x_i$ except for $x_{t+1}$ in the middle. 
The formula $\delta_{2t}$ states that there exists an element $y$ different from all~$x_i$ that is non-adjacent to~$x_1$ and adjacent to $x_i$ for $i\geq 2$, and symmetrically, $\delta_{2t}^\star$ states that there exists~$y$ different from all~$x_i$ that is non-adjacent exactly to the last element $x_{2t+1}$. In particular, when $\delta_{2t}$ or~$\delta_{2t}^\star$ are satisfied, then~$y$ has a non-neighbor and $2t$ neighbors among $x_1,\ldots, x_{2t+1}$. See \Cref{fig:formulas} for an illustration.

\bigskip

The following lemma is a generalization of an observation made for the more restricted monadically stable, co-matching-free graph classes from~\cite{dreier2022combinatorial} to the more general setting of half-graph-free, co-matching-free classes.
Let \[\Gamma_t:=\{\eta, \chi_{2t}, \chi_{2t}^\star, \theta_{2t},\delta_{2t}, \delta^\star_{2t}\}.\]

\begin{restatable}{lemma}{ChiCompleteUniform}\label{lem:Chi-complete-uniform}
Let $t\in\mathbb N$ and let $G$ be a graph of half-graph index $<t$ and co-matching index $<t$.
Let $I=(v_1,\dots,v_n)$ be a $\Gamma_t$-indiscernible sequence in $G$ of length $n\ge 8t+1$.
Then for every vertex $w\in V(G)\setminus I$ we have
\[
|N(w)\cap I|<2t
\qquad\text{or}\qquad
I\subseteq N(w).
\]
\end{restatable}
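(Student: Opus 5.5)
We first apply \Cref{lem:Gamma-dichotomy}: $I$ is in particular $\{\chi_{2t},\chi_{2t}^\star\}$-indiscernible and $n\ge 8t+1\ge 4t$. Hence every $w\notin I$ satisfies $|N(w)\cap I|<2t$ or $|I\setminus N(w)|<2t$. It remains to rule out a vertex $w\notin I$ with $1\le |I\setminus N(w)|<2t$. Fix such a $w$ and a non-neighbor $v_p\in I$. Since $w$ has fewer than $2t$ non-neighbors in $I$, we want to choose $2t$ neighbors of $w$ surrounding $v_p$ and transfer the pattern ``one non-neighbor among many neighbors'' to every position of $I$ by indiscernibility.

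We split into cases by the position of $v_p$. If at least $t$ neighbors of $w$ lie before $v_p$ and at least $t$ after, then $w$ witnesses $\theta_{2t}$ on a suitable increasing $(2t+1)$-tuple with $v_p$ in the middle. If fewer than $t$ neighbors lie before $v_p$, then $v_p$ is among the first $3t$ positions, and at least $n-3t-2t\ge 2t$ neighbors lie after it; $w$ then witnesses $\delta_{2t}$ with $v_p$ as first coordinate. Symmetrically, the remaining case yields $\delta_{2t}^\star$. The choice $n\ge 8t+1$ gives the needed slack in each case. Since $w\notin I$, the distinctness conjuncts hold.

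In each case, indiscernibility gives the formula for all increasing tuples. We then build a co-matching of order $t$. In the $\theta_{2t}$ case, take middle elements $m_1<\dots<m_t$ chosen with at least $t$ elements of $I$ on each side (e.g.\ $m_j=v_{t+j}$, using $n\ge 3t$). Take the tuple consisting of $v_1,\dots,v_t$, then $m_j$, then $m_1,\dots,m_t$ without $m_j$, padded by later elements as needed. Its witness $y_j$ is adjacent to every $m_i$ with $i\ne j$ and non-adjacent to $m_j$. The $y_j$ are distinct from all $m_i$ by the distinctness conjunct. They are pairwise distinct because $y_j$ is adjacent to $m_i$ exactly when $i\ne j$. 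So $(m_j)$ and $(y_j)$ semi-induce a co-matching of order $t$, a contradiction. For $\delta_{2t}$, take $m_j$ and let the tuple be $m_j$ followed by $2t$ elements containing all $m_i$ with $i>j$. This only controls later elements. Instead, choose the tuple $(m_j, \text{all } m_i \text{ for } i\ne j \text{ placed after})$, which requires $m_j$ to be first, and that fails for $i<j$.

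The main obstacle is therefore the $\delta$ cases. There we expect to combine both directions: indiscernibility with $\delta_{2t}$ yields, for each position $p$ with at least $2t$ later elements, a witness missing exactly that element among a window following it. We would build a half-graph-like or co-matching structure from windows $(v_j,v_{j+1},\dots,v_{j+2t})$. The witness $y_j$ misses $v_j$ and is adjacent to $v_{j+1},\dots,v_{j+2t}$. Its adjacency to $v_i$ for $i<j$ is uncontrolled, so we would apply Ramsey/pigeonhole on those entries, or refine: using $\chi$-indiscernibility via \Cref{lem:Gamma-dichotomy} on $y_j$ itself. If $y_j\notin I$, then $y_j$ has $\ge 2t$ neighbors in $I$, so it misses fewer than $2t$ elements of $I$. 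Taking $m_j$ spaced $2t$ apart, e.g.\ $m_j=v_{2tj}$, and pigeonholing over the boundedly many misses, we extract a subfamily that is a co-matching. If $y_j\in I$, we handle it by spacing the $m_j$ so that they avoid the witnesses. Getting the constants to fit within $n\ge 8t+1$ is the delicate part. If this fails, the fallback is to derive a half-graph of order $t$ from the ordered structure of the misses, contradicting the half-graph index bound.
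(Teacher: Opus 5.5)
\paragraph{Verdict: genuine gap in the $\delta$ cases.}
Your start is right. You apply \Cref{lem:Gamma-dichotomy} to reduce to a vertex $w\notin I$ with $1\le |I\setminus N(w)|<2t$, then split three ways, and your handling of the $\theta_{2t}$ case matches the last step of the paper's proof.

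One slip in the $\theta_{2t}$ case: the tuple you describe ($v_1,\dots,v_t$, then $m_j$, then the remaining $m_i$) is not increasing when $j>1$, so indiscernibility says nothing about it. Use sliding windows instead. With $m_j=v_{t+j}$, apply $\theta_{2t}$ to $(v_j,v_{j+1},\dots,v_{j+2t})$. Its first $t$ coordinates contain $m_1,\dots,m_{j-1}$, its middle coordinate is $m_j$, and its last $t$ coordinates contain $m_{j+1},\dots,m_t$. This gives $E(y_j,m_i)\iff i\neq j$ for all $i,j\in[t]$, and the distinctness conjunct covers every $m_i$.

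\paragraph{The missing idea.}
You leave the $\delta$ cases open, and the Ramsey/pigeonhole plan you sketch is neither carried out nor needed. The missing idea is that $\delta_{2t}$ is only a bridge to $\theta_{2t}$, not the source of the co-matching.
\begin{enumerate}
\item Once some increasing tuple satisfies $\delta_{2t}$, indiscernibility moves the missed vertex to the centre: $G\models\delta_{2t}(v_{4t+1},\dots,v_{6t+1})$. Let $w'$ be a witness.
\item $w'\notin I$. Otherwise $w'$ has a neighbour in $I$, so by $\eta$-indiscernibility $I$ is a clique. Then $w'$ would be adjacent to $v_{4t+1}\neq w'$, contradicting $\delta_{2t}$. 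This also settles your concern about witnesses lying in $I$.
\item The witness $w'$ has at least $2t$ neighbours in $I$ and at least one non-neighbour. So \Cref{lem:Gamma-dichotomy} applied to $w'$ gives $|I\setminus N(w')|<2t$.
\item Exactly $4t$ elements of $I$ precede $v_{4t+1}$ and at least $4t$ follow it; this is where $n\ge 8t+1$ is used. Hence $w'$ has at least $t$ neighbours on each side.
\item So $w'$ witnesses $\theta_{2t}$ on an increasing tuple with $v_{4t+1}$ in the middle, and you are back in the $\theta$ case.
\end{enumerate}
The symmetric argument with $\delta^\star_{2t}$ handles the other end. With this bridge the three-way split is unnecessary. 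The paper only assumes w.l.o.g.\ that the non-neighbour lies among the first $4t+1$ positions, which already yields $\delta_{2t}$ with witness $w$.
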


\setcounter{equation}{0}
\begin{proof}
Fix $w\in V(G)\setminus I$.
By \cref{lem:Gamma-dichotomy} we have
$|N(w)\cap I|<2t$ or $|I\setminus N(w)|<2t$.
In the first case we are done, so assume
\begin{equation}\label{eq:ccu-few-nonneigh}
|I\setminus N(w)|<2t.
\end{equation}

If $I\subseteq N(w)$, we are done as well; hence assume for contradiction that there exists
\begin{equation}\label{eq:ccu-pick-u}
u\in I\setminus N(w).
\end{equation}

Assume without loss of generality that $u$ occurs among the first $4t+1$ vertices of the sequence,
i.e.\ $u=v_p$ with $p\le 4t+1$; otherwise we argue symmetrically with $\delta_{2t}^\star$.
Since $n\ge 8t+1$, there are at least $4t$ vertices of $I$ after $u$.
By~\eqref{eq:ccu-few-nonneigh}, among these vertices there are at most $2t-1$ non-neighbors of~$w$,
hence in the suffix after $u$ there are at least $2t$ neighbors of $w$.
Choose indices
\[
p< \alpha_1<\cdots<\alpha_{2t}\le n
\qquad\text{with}\qquad
v_{\alpha_1},\dots,v_{\alpha_{2t}}\in N(w).
\]
Then the increasing $(2t{+}1)$-tuple
\[
(u,\,v_{\alpha_1},\dots,v_{\alpha_{2t}})
\]
satisfies $\delta_{2t}$ with witness $y=w$ (where $w\notin I$ ensures the conjunct $y\neq x_i$ is satisfied).

Since $I$ is $\delta_{2t}$-indiscernible, it follows that $\delta_{2t}$ holds on every increasing $(2t{+}1)$-tuple from $I$.
In particular, we have
\begin{equation}\label{eq:ccu-delta-shift}
G\models \delta_{2t}(v_{4t+1},v_{4t+2},\dots,v_{6t+1}).
\end{equation}
Fix a witness $w'\in V(G)$ for~\eqref{eq:ccu-delta-shift} (which satisfies $w'\neq v_i$ for all $i\in\{4t+1,\dots,6t+1\}$ as the conjunct $\bigwedge_{1\le i\le 2t+1}y\neq x_i$ of $\delta_{2t}$ is satisfied).
Let $u'=v_{4t+1}$, to which $w'$ is not adjacent as $\delta_{2t}$ is satisfied. 

As $w'$ is not adjacent to all vertices of $I$ and has at least $2t$ neighbors in $I$, again by \cref{lem:Gamma-dichotomy}, $|I\setminus N(w')|<2t$.
As \(u\) is sufficiently centered in the sequence, we can choose $t$ neighbors of $w'$ that occur before $u'$ and $t$ neighbors that occur after~$u'$. 
Thus we obtain an increasing $(2t{+}1)$-tuple
\[
(c_1',\dots,c_t',\,u',\,c_{t+2}',\dots,c'_{2t+1})
\]
from $I$ such that $w'c_i'\in E(G)$ for all $i\in[2t+1]\setminus \{t+1\}$, while $w'u'\not\in E(G)$. 
Hence, $G\models \theta_{2t}$ for this tuple. 
Since $I$ is $\theta_{2t}$-indiscernible, it follows that for every $i\in[t]$ we have
\begin{equation}\label{eq:theta-sliding-polished}
G\models \theta_{2t}(v_i,v_{i+1},\dots,v_{i+2t}).
\end{equation}
Fix, for each $i\in[t]$, a witness $y_i\in V(G)$ for~\eqref{eq:theta-sliding-polished}.
Then
\[
y_iv_{i+j}\in E(G)\ \text{ for } j\in\{0,\dots,t-1\}\cup\{t+1,\dots,2t\},
\qquad\text{and}\qquad
y_iv_{i+t}\notin E(G),
\]
and moreover $y_i\neq v_{i+j}$ for all $0\le j\le 2t$ by the last conjunct in $\theta_{2t}$.

Define
\[
c_j := v_{t+j}\qquad (j=1,\dots,t+1),
\]
so $(c_1,\dots,c_{t+1})=(v_{t+1},v_{t+2},\dots,v_{2t+1})$.
By the same sliding-window argument as before, for all \mbox{$i,j\in[t]$} we have
\begin{equation}\label{eq:ccu-comatching-pattern}
E(y_i,c_j)\quad\Longleftrightarrow\quad i\neq j.
\end{equation}

It also follows from the construction that all involved vertices are pairwise distinct.
Hence the bipartite pattern between $\{c_1,\dots,c_t\}$ and $\{y_1,\dots,y_t\}$ semi-induces a co-matching of order $t$,
contradicting the assumption that the co-matching index of $G$ is $<t$.
This contradiction shows that our assumption~\eqref{eq:ccu-pick-u} was false, and therefore $I\subseteq N(w)$.
\end{proof}

We can now derive \Cref{cor:homset-neighborhood-structure-k}, which we restate for convenience.

\Homset*

\begin{proof}
Let $h\colon\N\times\N\to\N$ be the function from \cref{lem:indiscernibles-exist} and let $k=|\Gamma_t|\cdot (2t+1)$. 
Then, by the lemma, every vertex sequence of length at least $h(k,m)$ contains a $\Gamma_t$-indiscernible subsequence of length~$m$.
Let $L_0 := \max(L,8t+1)$ and define $f(t,L) := h(k,L_0)$.

Assume $n\ge f(t,L)$.
Take an arbitrary ordering $(v_1,\ldots,v_n)$ of $V(G)$ and consider its prefix $P$ of length~$f(t,L)$. 
We can pick such a prefix because $f$ is computable. 
By definition of $f(t,L)$, $P$ contains an increasing $\Gamma_t$-indiscernible subsequence
$S=(u_1,\ldots,u_{L_0})$ of length $L_0$.

\begin{enumerate}
    \item Since $\eta\in\Gamma_t$, the sequence $S$ is $\eta$-indiscernible.
By \cref{obs:clique-or-independent}, $G[S]$ is either a clique or an independent set.
    \item \cref{lem:Chi-complete-uniform} applies to $S$ and yields
    $|N(w)\cap S|<2t$ or $S\subseteq N(w)$.
\end{enumerate}

If $L_0>L$, take any subset $S'\subseteq S$ of size $L$. 
The induced subgraph on $S'$ is still a clique or independent set, and for every $w\in V(G)\setminus S'$ we have
$|N(w)\cap S'|<2t$ or $S'\subseteq N(w)$.

For the running time we can just apply a brute-force algorithm. 
We iterate over all subsets $S$ of $P$ of size $L$ and verify whether $S$ satisfies the two conditions. 
Generating all such $S$ takes time $\Oof(2^{f(t,L)})$. 
Checking if $S$ induces a clique or independent set can be done in time $\Oof(L^2\cdot n)$, as an adjacency test between any two vertices of $S$ takes time $\Oof(n)$, assuming an adjacency list encoding. 
Also, testing if every vertex $w\in V(G)\setminus S$ has either fewer than $2t$ neighbors in $S$ or is adjacent to all of $S$ can be done in time $\Oof(L\cdot n)$. 
For this, equip every vertex $w\in V(G)$ with a counter. 
We iterate over all vertices $s\in S$ and increase the counter of every neighbor of $s$ by one when considering it. 
When done, we iterate once over all vertices of $V(G)\setminus S$ and check whether their counter is at most $2t$ or equal to~$|S|$. 
Defining $g$ accordingly yields the claimed running time. 
\end{proof}

\section{Hardness of Dominating Set for half-graph-free classes}

In this section, we prove the existence of a graph class that has bounded half-graph index, but on which the \textsc{Dominating Set} problem is W[1]-hard.

\begin{theorem}\label{thm:hardness-ds-halfgraph-free}
    There is a graph class with half-graph index at most 16 on which \textsc{Dominating Set} is W[1]-hard.
\end{theorem}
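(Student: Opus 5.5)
Reduce from \textsc{Grid Tiling} (W[1]-hard parameterized by $k$), reusing the gadget philosophy of \Cref{thm:hardness-is-halfgraph-free}. All connections are either cliques, co-matchings, or ``different component'' relations. Such connections create no long half-graphs, because ``same first/second component'' is an equivalence relation. The target is a \textsc{Dominating Set} instance $(G_\SS, k')$ with $k'$ depending only on $k$, say $k' = k^2$ (plus possibly a constant number of forced vertices per cell).

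\textbf{Construction.} For every cell $\pi$, create a clique $C_\pi=\{c(\pi,\tau):\tau\in S_\pi\}$ and attach private guard vertices adjacent to all of $C_\pi$. The guards force any solution of size $k^2$ to pick exactly one vertex $c(\pi,s(\pi))$ per cell. For consistency between horizontally adjacent cells $\pi,\pi_\blacktriangleright$, add, for every value $a\in[n]$, two ``checker'' vertices $x^{\blacktriangleright}_{\pi,a}$ and $\bar x^{\blacktriangleright}_{\pi,a}$:
\begin{itemize}
\item $x^{\blacktriangleright}_{\pi,a}$ is adjacent to all $c(\pi,\tau)$ with first component $\neq a$ and all $c(\pi_\blacktriangleright,\tau')$ with first component $= a$;
\item $\bar x^{\blacktriangleright}_{\pi,a}$ is adjacent symmetrically, with the roles of $\pi$ and $\pi_\blacktriangleright$ swapped.
\end{itemize}
If the chosen tiles have first components $a_1$ and $a_2$, then $x_{\pi,a_1}$ is undominated unless $a_2=a_1$, and symmetrically. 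Hence all checkers are dominated iff the selections agree on the first component. Vertical pairs are handled the same way with second components. Checkers should be independent of each other and made twin-free within their gadget as needed. Correctness then follows exactly as in the first claim of \Cref{thm:hardness-is-halfgraph-free}: a grid tiling solution yields a dominating set of size $k^2$, and conversely the guards force one vertex per clique, after which the checkers enforce the agreement constraints.

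\textbf{Bounding the half-graph index.} Follow the second claim of the earlier proof. Take a maximal semi-induced half-graph $(A_0,B_0)$ and partition $V(G_\SS)$ into parts: the cliques $C_\pi$ refined by first/second component classes, guard sets, and checker sets per $(\pi,\text{direction})$. Pigeonhole (square-root) twice, so that each side lies in one part or in pairwise distinct parts. If one side lies in distinct parts, use that every vertex touches boundedly many parts, which gives a constant bound. If both sides lie in single parts, the bipartite pattern between two parts is one of the following:
\begin{itemize}
\item complete or empty;
\item a co-matching;
\item a relation ``$\tau$ has component $a$'' between checkers indexed by $a$ and tiles, which is a disjoint union of bicliques;
\item its complement ``component $\neq a$'', a co-disjoint union of bicliques.
\end{itemize}
None of these contains a half-graph of order $3$, by the same equivalence-relation argument as before. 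Tuning the part counts should yield order at most $2$ for all cases and a final bound such as $2^4=16$.

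\textbf{Main obstacle.} The delicate step is designing the checker gadget so that it simultaneously (i) keeps the budget $f(k)$ rather than dependent on $n$, and (ii) keeps every vertex's neighborhood spread over only constantly many parts with equivalence-type patterns. Checker vertices themselves must be dominated too. With guards in place they cannot be profitably chosen, so I would verify carefully that choosing a checker or a guard never helps, possibly by replacing each guard with $k^2+1$ false twins. If bounding how many parts a checker touches fails, I would first try splitting the checkers into separate parts per value $a$ and argue that the distinct-parts case still yields only bounded half-graphs, because all checkers in a direction see the same two cliques.
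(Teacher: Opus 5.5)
Your reduction is sound, and it takes a different route from the paper. The paper designs no new gadget: it reuses the textbook reduction from \textsc{Multicolored Independent Set} to \textsc{Dominating Set} (Cygan et al., Thm.~13.9), so correctness comes for free, and it adds only a short half-graph count. In your graph, two private guards per cell and budget $k^2$ force exactly one tile per cell. Then $x^{\blacktriangleright}_{\pi,a}$, with $a$ the first component of $s(\pi)$, is dominated iff $s(\pi_\blacktriangleright)$ also has first component $a$; the $\bar x$ checkers are redundant but harmless.

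The gap is the constant $16$, which you never derive. The claim that ``tuning'' gives order at most $2$ in every case is false, and the double square-root pigeonhole can only give $(\text{case bound})^4$. Take the unrefined parts: the tiles $(1,1),(2,1),(2,2),(3,2),(3,3)\in S_\pi$ together with $\bar x^{\blacktriangleright}_{\pi,1}$, $\bar x^{\triangledown}_{\pi,1}$, $x^{\blacktriangleright}_{\pi,3}$, $x^{\triangledown}_{\pi,3}$ and a guard form a semi-induced half-graph of order $5$. One side lies in $C_\pi$ and the other in five distinct parts. Counting the parts a tile meets gives only $10$, hence roughly $10^4$ overall. With the partition you actually propose (cliques refined by component classes) it is worse: $x^{\blacktriangleright}_{\pi,a}$ meets up to $n-1$ refined parts, so ``every vertex touches boundedly many parts'' fails outright. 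The structural reason is that your checkers can miss an entire component class of $C_\pi$. The paper's argument instead uses that every vertex has no neighbours or at most $3$ non-neighbours in each $V_i^*$, and that no vertex sees three of these sets.

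You can repair this with a direct argument in the paper's style. Let $W$ be the guards and checkers. Then $W$ is independent, and every vertex of $W$ meets at most two cells. On a set $A'\subseteq C_\pi$, checker neighbourhoods are traces of $\{\tau_1=a\}$, $\{\tau_1\neq a\}$, $\{\tau_2=b\}$ or $\{\tau_2\neq b\}$. A strictly increasing chain of non-empty proper such traces contains at most one of each type, so it has length at most $4$. Now fix a half-graph $(a_i),(b_j)$ of order $t$ and consider three cases.

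\begin{enumerate}
\item If $b_t\in W$, then $A$ consists of tiles in at most two cells. If $a_{i_1},\dots,a_{i_r}$ are the ones in $C_\pi$, then $b_{i_1},\dots,b_{i_{r-1}}$ are checkers whose traces form such a chain. Hence $r\le 5$ and $t\le 10$.
\item If $a_1\in W$, the argument is symmetric.
\item If $a_1$ and $b_t$ are tiles, they lie in one clique $C_\pi$. Completeness of $C_\pi$, independence of $W$, and the edges $a_ib_i$ force $a_1,\dots,a_m$ to be tiles and $a_{m+1},\dots,a_t\in W$, with $b_1,\dots,b_{m-1}\in W$ and $b_{m+1},\dots,b_t$ tiles. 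Applying the chain bound on both sides gives $m\le 5$ and $t-m\le 5$.
\end{enumerate}

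So your construction has half-graph index at most $10$ and does prove the theorem. It does so only with this argument in place of the pigeonhole sketch.
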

\begin{proof}
We follow the standard parameterized reduction from \textsc{Multicolored Independent Set} 
to \textsc{Dominating Set} of Cygan et al.~(see \cite[Thm.\ 13.9]{cygan2015parameterized}).
Given an instance $(G, k, (V_1,\ldots,V_k))$ of \textsc{Multicolored Independent Set}, the reduction constructs a graph $G'$ as follows:
\begin{enumerate}
    \item For every vertex $v \in V(G)$, introduce $v$ in $G'$.
    \item For every $1 \leq i \leq k$, make $V_i$ a clique in $G'$.
    \item For every $1 \leq i \leq k$, introduce two new vertices $x_i, y_i$ into $G'$ and
    make them adjacent to every vertex of $V_i$. Denote \(V_i^* = V_i \cup \{x_i,y_i\}\).
    \item For every edge $e \in E(G)$ with endpoints $u \in V_i$ and $v \in V_j$, introduce
    a vertex $w_e$ into $G'$ and make it adjacent to every vertex of $(V_i \cup V_j) \setminus \{u, v\}$.
    Denote \(W = \{ w_e \mid e \in E(G)\}\).
\end{enumerate}
Cygan et al.~\cite{cygan2015parameterized} argue the correctness of this reduction. 
It remains to argue that $G'$ has half-graph index at most 16.
Suppose, for contradiction, that there is a semi-induced half-graph consisting of \(a_1,\dots,a_{17}\) and \(b_1,\dots,b_{17}\).
Since \(W\) is an independent set and each pair \(a_i,b_i\) is adjacent, at most one of \(a_i,b_i\) can lie in \(W\). Hence at most 17 half-graph vertices lie in \(W\).
By symmetry, assume the \(a\)-side contains at most 8 \(W\)-vertices.
Thus at least 9 \(a\)-vertices lie in \(V_1^*,\dots,V_k^*\).
If some \(V_i^*\) contained 5 \(a\)-vertices, then some \(b\)-vertex would be
adjacent to one vertex from \(V_i^*\) and non-adjacent to the other four.
This cannot happen, as every vertex has either no neighbors or at most 3 non-neighbors in \(V_i^*\).
Hence, each set \(V_i^*\) can contain at most 4 \(a\)-vertices,
and therefore the 9 \(a\)-vertices must be distributed over at least three different \(V_i^*\)-sets.
But then \(b_{17}\), which is adjacent to all \(a\)-vertices, must be adjacent to vertices from at least three different \(V_i^*\)-sets.
This is a contradiction, as no vertex in \(G'\) has neighbors in more than two \(V_i^*\)-sets.
\end{proof}

\section{Hardness results for co-matching-free classes}
\label{sec:hardness-cmi}

In this section, we derive hardness results for co-matching-free classes from the literature.
A graph is a \emph{unit square graph} if it is an intersection graph of axis-parallel unit squares.
The following two hardness results were proven by Marx.

\begin{theorem}[{\cite[Theorem 1]{marx06}}]
    \textsc{Dominating Set} is W[1]-hard on unit square graphs.
\end{theorem}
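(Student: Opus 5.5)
The plan is to follow Marx's original strategy and give a parameterized reduction from a grid-type problem whose constraints are monotone comparisons. Such constraints are exactly what axis-parallel unit squares can express. Concretely, I would reduce from \textsc{Grid Tiling with $\le$}, which is W[1]-hard parameterized by $k$ (cf.\ \cite[Thm.\ 14.30]{cygan2015parameterized}). In this variant, horizontally adjacent tiles $s(\pi),s(\pi_\blacktriangleright)$ must satisfy that the first component of $s(\pi)$ is at most that of $s(\pi_\blacktriangleright)$. The vertical constraint is the same with second components. The target instance will be a unit square graph $G$ together with a budget $k' = O(k^2)$, and equality constraints never need to be encoded directly.

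The construction proceeds in the following order.
\begin{enumerate}
\item Fix a small $\epsilon \ll 1/n$ and place, for every cell $\pi=(i,j)$ and every tile $(a,b)\in S_\pi$, a unit square centered at $(Ci+\epsilon a,\, Cj+\epsilon b)$ for a suitable constant $C$. These squares pairwise overlap, so they form a clique $T_\pi$; choosing a dominating vertex in $T_\pi$ encodes choosing a tile.
\item Attach to each $T_\pi$ a constant number of private ``anchor'' squares that are dominated only by squares of $T_\pi$ (or by themselves). This forces every dominating set to spend at least one vertex per cell.
\item Between $T_\pi$ and $T_{\pi_\blacktriangleright}$, insert a horizontal chain of unit squares, with the chain length tuned in the same way for every $\pi$ and with the middle squares shifted by multiples of $\epsilon$. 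The chain should be dominated by exactly a prescribed number of extra vertices if and only if the $x$-offset $\epsilon a$ of the chosen left square does not exceed the $x$-offset $\epsilon a'$ of the chosen right square.
\item Insert vertical chains between $T_\pi$ and $T_{\pi_\triangledown}$ in the same way, using $y$-offsets.
\item Set $k'$ to be the number of cells plus the total number of extra vertices prescribed for all chains.
\end{enumerate}

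Completeness is then routine. Given a solution, pick the corresponding squares, and dominate each chain greedily from its two ends; since the $\le$ relations hold, no chain needs an extra vertex beyond its budget.

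Soundness is the step I expect to be the main obstacle. A dominating set of size $k'$ could in principle place vertices in unintended positions, for instance two vertices in one cell and none in another, or a chain square in place of a tile square. I would handle this by a tight counting argument. First, the anchors and the chain squares whose neighborhoods are pairwise disjoint form a packing of exactly $k'$ pairwise far-apart squares, so each of these packing squares must be dominated by its own dominating vertex. Hence every gadget receives exactly its budget, and the budget cannot be redistributed between gadgets. Second, I would show that within each budget any choice can be replaced by a canonical one without losing domination: one tile square per cell, and chain vertices at standard positions. Once the solution is canonical, the geometry of the $\epsilon$-shifts makes the middle chain square dominated exactly when the $\le$ constraint holds.

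Getting the constants (the spacing $C$, the chain lengths and the $\epsilon$-shifts) right so that the packing is exact and the shifts are resolved correctly is the delicate part. I would fix them by first drawing a single horizontal chain and verifying the one-dimensional interval version of the claim, and then observing that horizontal and vertical chains are far enough apart not to interact.
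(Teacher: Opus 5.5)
The paper does not prove this statement; it imports it from Marx~\cite{marx06}. Your sketch therefore has to stand on its own, and it has a genuine gap. The template is reasonable: \textsc{Grid Tiling with $\le$}, $\epsilon$-perturbed tile cliques $T_\pi$, and private anchors plus a tight packing to force one vertex per cell and a fixed budget per gadget. However, the only non-routine ingredient, the gadget between $T_\pi$ and $T_{\pi_\blacktriangleright}$ that must certify the comparison of first components, is never constructed. Worse, the form you describe (a chain of unit squares whose middle square decides) cannot work.

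Here is why. Whether a square $Q\notin T_\pi$ meets the tile square $(a,b)$ depends on $a$ only if the centre of $Q$ has $x$-coordinate within $\epsilon n$ of $Ci\pm 1$ and $y$-coordinate within $1+\epsilon n$ of $Cj$. This region is covered by $O(1)$ pieces of $L_\infty$-diameter at most $1$, and squares centred in one piece pairwise intersect. So a chain (an induced path, or any gadget whose clique number is bounded independently of $n$) has only $O(1)$ squares there, and each contributes a single threshold in $a$. Consequently, for fixed $b$, the neighbourhood of tile $(a,b)$ outside $T_\pi$ takes only $O(1)$ values as $a$ ranges over $[n]$.

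For large $n$, pick $a<\tilde a$ with the same neighbourhood. Take $\pi$ with no left neighbour, and compare $S_\pi=\{(a,b)\}$ with $S_\pi=\{(\tilde a,b)\}$, with $S_{\pi_\blacktriangleright}=\{(a,b')\}$ and the remaining cells consistent. Your construction outputs isomorphic graphs with the same budget, yet one instance is a yes-instance and the other a no-instance. Shifting the middle squares by multiples of $\epsilon$ cannot help, because squares away from the two cells meet no tile at all. The same count shows that any working gadget needs $\Omega(n)$ pairwise intersecting squares next to $T_\pi$. In other words, you need a thick wire: columns of $n$ squares at distinct $\epsilon$-offsets, through which the threshold $a$ is propagated column by column.

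Once the wire is thick, the steps you call routine are where the work lies.
\begin{itemize}
\item \emph{Calibration.} With columns centred at $Ci+s+\epsilon\theta$, a wire vertex in column $2$ with offset $a+1$ dominates column $1$ from offset $a+1$ on and column $3$ up to offset $a+1$. So each wire vertex can advance the transmitted threshold by one unit, and with naive offsets the budget certifies a comparison shifted by the number of wire vertices. This must be prevented (e.g.\ by shifting alternate columns by $\epsilon/2$) or compensated at the far end.
\item \emph{Soundness.} You must exclude every non-canonical placement of the budget, i.e.\ vertices in other columns or at other offsets. Replacing a solution by a canonical one is not automatic here, because the offsets of the wire vertices themselves carry the value. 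You need a monotonicity argument that no placement beats the threshold-preserving one.
\item \emph{Packing.} Your packing squares along a wire need carefully chosen (e.g.\ increasing) offsets to have disjoint closed neighbourhoods.
\item \emph{Local geometry.} Horizontal wires, vertical wires and anchors are \emph{not} far apart near a cell: they all meet or crowd around $T_\pi$. Their non-interaction, and the privacy of the anchors, must be enforced by offsetting the wires (e.g.\ by half a unit).
\end{itemize}
Until the wire is specified and these claims are checked, the argument is a plausible template rather than a proof.
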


\begin{theorem}[{\cite[Theorem 1]{marx05}}]
    \textsc{Independent Set} is W[1]-hard on unit square graphs.
\end{theorem}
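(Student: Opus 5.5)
Since this theorem is cited from Marx~\cite{marx05}, the plan is to reconstruct the standard geometric reduction from \textsc{Grid Tiling} rather than give a new argument. It is convenient to start from the variant \textsc{Grid Tiling with $\le$}, in which the equality constraints are replaced by inequalities: $s(\pi)$ has first component at most that of $s(\pi_\blacktriangleright)$, and second component at most that of $s(\pi_\triangledown)$. This variant is also W[1]-hard~\cite[Thm.~14.30]{cygan2015parameterized}. Given such an instance with parameter $k$, we output a family of open axis-parallel unit squares and ask for an independent set of size $k^2$ in their intersection graph.

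\textbf{Construction.} Fix $\epsilon := 1/(4n)$. For each cell $\pi=(i,j)\in[k]^2$ and each tile $\tau=(a,b)\in S_\pi$, place a unit square centered at
\[
  p(\pi,\tau) := (i + \epsilon a,\; j + \epsilon b).
\]
All squares of one cell have centers within $L_\infty$-distance $\epsilon n<1$ of each other, so they pairwise intersect. Hence each cell contributes a clique, and an independent set of size $k^2$ picks exactly one tile $s(\pi)$ per cell.

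\textbf{Horizontal and vertical neighbours.} Two open unit squares are disjoint if and only if their centers differ by at least $1$ in some coordinate. For horizontal neighbours $\pi,\pi_\blacktriangleright$, the $y$-difference is at most $\epsilon n<1$. The $x$-difference is $1+\epsilon(a'-a)$, which is at least $1$ exactly when $a\le a'$. So the squares are disjoint exactly when the $\le$-constraint holds, and the vertical direction is symmetric. Cells at $L_\infty$-distance at least $2$ in the grid are always disjoint.

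\textbf{Main obstacle: diagonal cells.} The remaining difficulty, which I expect to be the main one, is the interaction between diagonal cells $(i,j)$ and $(i+1,j\pm1)$. There both coordinate differences are $1\pm O(\epsilon n)$, so spurious intersections could appear. For the pair $(i,j),(i+1,j+1)$ I would first try to derive disjointness from the constraints already imposed along the path through $(i+1,j)$ or $(i,j+1)$. If that does not suffice, I would strengthen the source problem so that monotonicity is forced in both coordinates along rows and columns; equality-based \textsc{Grid Tiling} encoded with $\le$ in both coordinates does this. Then $a_{i,j}\le a_{i+1,j+1}$ holds in every solution, which gives $x$-difference at least $1$. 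For the anti-diagonal pair $(i,j),(i+1,j-1)$ I would use the symmetric argument on the $y$-coordinate, or, as Marx does, shift the rows slightly (stagger them) so that anti-diagonal squares are separated by a margin independent of the tiles.

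\textbf{Correctness.} Once the diagonal issue is settled, correctness follows in both directions. A feasible tiling yields $k^2$ pairwise disjoint squares, and conversely any $k^2$ disjoint squares select one tile per cell satisfying all the inequalities. The reduction is polynomial and maps parameter $k$ to $k^2$, which gives W[1]-hardness.
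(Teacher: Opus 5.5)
The paper gives no proof of this statement; it simply cites Marx. Your reconstruction has a genuine gap at exactly the point you flag, and none of the remedies you list closes it.

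\textbf{Why the diagonal pairs break the reduction.} With centers $(i+\epsilon a, j+\epsilon b)$, the squares of cells $(i,j)$ and $(i+1,j+1)$ are disjoint iff $a_{i,j}\le a_{i+1,j+1}$ or $b_{i,j}\le b_{i+1,j+1}$. Neither \textsc{Grid Tiling} variant implies this, because first components are only compared within a fixed $j$ and second components only within a fixed $i$. Concretely, take $k=2$ and singleton sets $S_{(1,1)}=\{(2,2)\}$, $S_{(2,1)}=\{(2,1)\}$, $S_{(1,2)}=\{(1,2)\}$, $S_{(2,2)}=\{(1,1)\}$. This instance satisfies both the equality and the $\le$ constraints. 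Yet the squares at $(1,1)$ and $(2,2)$ have center difference $(1-\epsilon,1-\epsilon)$ and intersect, so a yes-instance is mapped to a no-instance.

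\textbf{Why your remedies fail.}
\begin{itemize}
\item Your claim that equality-based \textsc{Grid Tiling} yields $a_{i,j}\le a_{i+1,j+1}$ is false. There $a_{i,j}$ depends only on $j$ and is unconstrained between consecutive values of $j$.
\item Suppose both coordinates were forced to be non-decreasing in both grid directions. The anti-diagonal pair $(i,j),(i+1,j-1)$ still needs $a_{i,j}\le a_{i+1,j-1}$ or $b_{i+1,j-1}\le b_{i,j}$. These two cells are incomparable, so no ``symmetric argument'' applies.
\item Staggering consecutive rows by a horizontal offset $d$ widens the gap for one diagonal direction by $d$ but narrows it for the other by $d$. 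That pair simply becomes a new tile-dependent constraint.
\item You cannot repair the forward direction by switching to a stronger source problem. In the converse direction the squares certify only the horizontal and vertical $\le$ constraints, so you would not recover a solution of the stronger problem.
\end{itemize}

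\textbf{The missing idea: a shear.} You need to separate both diagonal directions independently of the tiles. In \textsc{Grid Tiling} with $\le$, first coordinates are compared only within a fixed $j$ and second coordinates only within a fixed $i$. Hence replacing every tile $(a,b)\in S_{(i,j)}$ by $(a+(k-j)n,\ b+(i-1)n)$ does not change the solution set. Use these shifted values in the centers, with $\epsilon<1/(2kn)$. Then:
\begin{itemize}
\item For $(i,j),(i+1,j+1)$ with tiles $(a_1,b_1),(a_2,b_2)$, the $y$-difference is $1+\epsilon(b_2-b_1+n)>1$, whatever the tiles are.
\item For $(i,j),(i+1,j-1)$, the $x$-difference is $1+\epsilon(a_2-a_1+n)>1$, whatever the tiles are.
\item For horizontal and vertical neighbours, the on-axis difference is unchanged, and the off-axis difference has absolute value at most $\epsilon(2n-1)<1$. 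So your $\le$-analysis still goes through.
\item Each cell is still a clique, since its centers differ by at most $\epsilon(n-1)$ in each coordinate.
\item Cells at grid distance at least $2$ still differ by more than $1$ in some coordinate.
\end{itemize}
With this modification, both directions of your correctness argument become valid, and the reduction from \textsc{Grid Tiling} with $\le$ (parameter $k\mapsto k^2$) goes through.
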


Let $\overline{3K_2}$ be the complement of the disjoint union of three $K_2$, i.e., the complement of the matching of order $3$. Neuen observed the following obstructions for unit square graphs.

\begin{lemma}[{\cite[Lemma 3.3]{neuen16}}]\label{lem:forbidden_squares}
    Every unit square graph excludes $K_{1,5}$ and $\overline{3K_2}$ as induced subgraphs.
\end{lemma}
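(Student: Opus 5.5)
The plan is to work directly with the geometric model. Represent each vertex by the center $p=(p_x,p_y)$ of its closed axis-parallel unit square. Two closed unit squares intersect if and only if $|p_x-q_x|\le 1$ and $|p_y-q_y|\le 1$, that is, if and only if $\|p-q\|_\infty\le 1$. So a unit square graph is exactly a graph on points of $\mathbb R^2$ where adjacency means $L_\infty$-distance at most $1$. Both exclusions will then follow from short pigeonhole arguments on this distance condition.

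\emph{Excluding $K_{1,5}$.} Let $c$ be the center of the central vertex. Every neighbor has its center in the closed box $c+[-1,1]^2$. I would cover this box by the four closed quadrant boxes $c+[-1,0]^2$, $c+[0,1]\times[-1,0]$, $c+[-1,0]\times[0,1]$ and $c+[0,1]^2$, each of side length $1$. Any two points in the same quadrant have $L_\infty$-distance at most $1$, so they are adjacent. Among five pairwise non-adjacent neighbors of $c$, pigeonhole puts two in the same quadrant, which is a contradiction. Because the quadrants are closed, boundary points need no special treatment.

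\emph{Excluding $\overline{3K_2}$.} Label the six vertices $a_1,b_1,a_2,b_2,a_3,b_3$, where $a_ib_i$ are the only non-edges. Each non-adjacent pair is separated by more than $1$ in at least one coordinate. With three pairs and two coordinates, pigeonhole gives two pairs, say pairs $1$ and $2$, separated in the same coordinate, say $x$. After renaming within pairs we get $x(b_1)-x(a_1)>1$ and $x(b_2)-x(a_2)>1$. The cross pairs are adjacent, so $x(b_1)-x(a_2)\le 1$ and $x(b_2)-x(a_1)\le 1$. Adding the first two inequalities and adding the last two gives
\[
2 < x(b_1)+x(b_2)-x(a_1)-x(a_2)\le 2,
\]
which is a contradiction.

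I do not expect a real obstacle here. The only point needing care is the exact intersection criterion for closed squares: non-strict $\le 1$ for adjacency versus strict $>1$ for non-adjacency. This is precisely what makes the closed quadrants work and what makes the final inequality strict. The renaming step in the second argument is harmless, since swapping $a_i$ and $b_i$ preserves the graph.
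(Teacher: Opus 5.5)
Your proof is correct and complete. The paper does not prove this lemma itself; it simply imports it from \cite[Lemma 3.3]{neuen16}, so there is no in-paper argument to compare against. Your $L_\infty$ reformulation gives a short self-contained replacement for that citation, using two steps:
\begin{itemize}
\item the four closed quadrants of side $1$ covering $c+[-1,1]^2$, which gives the pigeonhole bound ruling out $K_{1,5}$;
\item forcing two of the three non-adjacent pairs of $\overline{3K_2}$ to be separated in the same coordinate, then adding the two ``$>1$'' and the two ``$\le 1$'' inequalities.
\end{itemize}
The argument also does not depend on the closed-square convention. For open squares, adjacency becomes $\|p-q\|_\infty<1$, and the same two arguments go through if you use half-open quadrants and swap strict and non-strict inequalities.
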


\begin{corollary}\label{cor:unitquare_comatching}
    The class of unit square graphs has bounded co-matching index.
\end{corollary}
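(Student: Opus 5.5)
We show that a large semi-induced co-matching forces an induced $K_{1,5}$ or an induced $\overline{3K_2}$, and then invoke \Cref{lem:forbidden_squares}. Concretely, the claim is that a unit square graph $G$ has co-matching index less than some constant $t_0$ (say $t_0 = R(5,3)+1$ or similar). Suppose $A=\{a_1,\dots,a_t\}$ and $B=\{b_1,\dots,b_t\}$ semi-induce a co-matching $C_t$ in $G$, so $a_ib_j\in E$ if and only if $i\neq j$. The edges inside $A$ and inside $B$ are arbitrary, so the first step is to clean them up with Ramsey's theorem.

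Colour each pair $\{i,j\}\subseteq[t]$ by two bits: whether $a_ia_j\in E$ and whether $b_ib_j\in E$. By Ramsey's theorem with four colours, if $t$ is large enough we get an index set $J$ of size $m$ (with $m$ as large as we like) on which both $A_J$ and $B_J$ are homogeneous. Each of them is then a clique or an independent set, and the co-matching structure between them is preserved.

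Next we case on whether $A_J$ and $B_J$ are cliques or independent sets.
\begin{itemize}
\item \emph{One side is independent}, say $B_J$. Pick five indices $j_1,\dots,j_5\in J$ and a sixth index $i\in J$. Then $a_i$ is adjacent to $b_{j_1},\dots,b_{j_5}$, and these five vertices are pairwise non-adjacent, which gives an induced $K_{1,5}$. This needs $m\ge 6$.
\item \emph{Both sides are cliques.} Take three indices $i,j,l\in J$ and the six vertices $a_i,a_j,a_l,b_i,b_j,b_l$. Every pair among them is adjacent except the pairs $a_ib_i$, $a_jb_j$ and $a_lb_l$. The non-edges therefore form exactly $3K_2$, so the six vertices induce $\overline{3K_2}$.
\end{itemize}
Both outcomes contradict \Cref{lem:forbidden_squares}. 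Hence the co-matching index of a unit square graph is bounded by the four-colour Ramsey number needed to guarantee $m=6$.

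The only point needing care is distinctness. The sets $A$ and $B$ are disjoint by the definition of semi-induced, so the vertices used in each case are pairwise distinct. There is no real obstacle beyond setting up the Ramsey step correctly.
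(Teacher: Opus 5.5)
Your proof is correct and follows essentially the same route as the paper: use Ramsey to make both sides of the co-matching homogeneous, then derive an induced $K_{1,5}$ if some side is independent and an induced $\overline{3K_2}$ if both sides are cliques, contradicting \Cref{lem:forbidden_squares}. The only difference is cosmetic: you apply four-colour Ramsey once, while the paper applies two-colour Ramsey twice and gets the bound $R(R(6))$; your offhand initial guess $t_0 = R(5,3)+1$ is not what the argument yields, but your final bound via the four-colour Ramsey number for $m=6$ is the correct one.
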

\begin{proof}
    Let $R(k)$ denote the Ramsey number, i.e., the smallest number such that every graph on $R(k)$ vertices contains either an independent set or a clique of size $k$.
    Suppose, toward a contradiction, that $G$ has a semi-induced co-matching of size $R(R(6))$ with sides $A$ and $B$.
    By Ramsey, there exists $A' \subseteq A$ of size $R(6)$ that is a clique or an independent set.
    Restricting to the corresponding matched vertices in $B$ and applying Ramsey to that subset, we obtain a semi-induced co-matching of size~$6$ where both sides are homogeneous.

    If one side is an independent set $\{a_1, \ldots, a_6\}$, then any $b_j$ on the other side is adjacent to the five independent vertices $\{a_i : i \neq j\}$, so $G$ contains $K_{1,5}$.
    If both sides are cliques, then any three matched pairs $(a_1, b_1), (a_2, b_2), (a_3, b_3)$ induce $\overline{3K_2}$.
    Both cases contradict \Cref{lem:forbidden_squares}.
\end{proof}

\section{Approximation of Independent Set for co-matching-free classes}

A \emph{proper $k$-coloring} of a graph $G$ is an assignment $\lambda : V(G) \to [k]$ such that every two adjacent vertices $u$ and $v$ in $G$ satisfy $\lambda(u) \neq \lambda(v)$.
The \emph{chromatic number} $\chi(G)$ denotes the smallest integer~$k$ such that there exists a proper $k$-coloring of $G$.
The \emph{clique number} $\omega(G)$ denotes the size of the largest clique in $G$.
A graph class $\CC$ is \emph{$\chi$-bounded} if there exists a function $f :\N \to \N$ that relates the two parameters:
\[
    \chi(G) \leq f(\omega(G)) \quad
    \text{for all $G\in \CC$.}
\]
We refer to \cite{scott2020survey} for a survey on $\chi$-boundedness.
We will derive an approximation algorithm for the \textsc{Independent Set} problem on co-matching-free classes, building on the following classic argument by Gyárfás. Here, $P_t$ denotes the path on $t$ vertices.
\begin{theorem}[Theorem 2.4 of \cite{A1987}]
    $P_t$-free graphs are $\chi$-bounded.
    In particular, every graph \(G\) contains either an induced \(K_k\), an induced \(P_t\), or a proper \(t^k\)-coloring of \(G\).
\end{theorem}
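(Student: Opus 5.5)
We prove the statement by Gyárfás' path-growing argument, run as an induction on $k$. It suffices to show the following: if $G$ has no induced $K_k$ and no induced $P_t$, then $\chi(G)\le t^{k}$ (in fact $(t-1)^{k-1}$). We may assume $G$ is connected, since the components can be colored separately.

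The base case $k=1$ or $k=2$ is immediate, because then $G$ has no edges (or no vertices) and one color suffices. For the inductive step, fix a vertex $v_1$. Build an induced path $v_1,\dots,v_m$ while maintaining the following invariant: there is a connected component $C_i$ of $G-(N[v_1]\cup\dots\cup N[v_{i-1}])$ such that
\begin{itemize}
  \item $C_i$ has chromatic number greater than $(t-1)^{k-1}-(i-1)(t-1)^{k-2}$, or some comparable threshold, and
  \item $v_i$ has a neighbor in $C_i$, while $v_1,\dots,v_{i-1}$ have no neighbor in $C_i$.
\end{itemize}
The key step uses the fact that each neighborhood $N(v_i)$ is $K_{k-1}$-free. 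By induction, $\chi(G[N(v_i)])\le (t-1)^{k-2}$. Removing $N[v_i]$ from $C_i$ therefore costs at most $(t-1)^{k-2}+1$ colors. Hence some component $C_{i+1}$ of $C_i-N[v_i]$ still has large chromatic number. Since $C_i$ is connected, some vertex of $N(v_i)\cap C_i$ has a neighbor in $C_{i+1}$, and we take that vertex as $v_{i+1}$.

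This keeps the path induced. Indeed, $v_{i+1}$ lies in $C_i$, so it is non-adjacent to $v_1,\dots,v_{i-1}$. Also, $C_{i+1}$ avoids all of $N[v_1],\dots,N[v_i]$. If $\chi(G)$ exceeds $t$ times the bound from $K_{k-1}$-freeness, this process survives $t$ rounds. The result is an induced $P_t$, which is a contradiction. Standard bookkeeping of the threshold, subtracting $(t-1)^{k-2}+1$ per step, gives the bound $t^{k}$ comfortably.

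The main obstacle is getting the invariant right. The argument needs both conditions at every step: $v_i$ must have a neighbor in the current high-chromatic component, and the component must avoid the closed neighborhoods of all earlier path vertices. Only with both is $v_{i+1}$ guaranteed to exist and to extend the path inducedly.

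In the second form of the statement, the "either/or" is constructive. If the process fails to produce a long path, the recursion itself yields a coloring, namely the colorings of the neighborhoods together with the leftover parts. Otherwise it yields either an induced $P_t$ or a clique, found recursively inside some $N(v_i)$.
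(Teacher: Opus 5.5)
Your proposal is correct and is the same Gyárfás path-growing argument that underlies the paper's appendix proof: color $N(v)$ by induction on $k$, move to a component of $G-N[v]$ entered through a neighbor of $v$, and extend the induced path. The paper recurses into \emph{every} such component to build the coloring explicitly, obtaining $f(k,t)=(t-1)(f(k-1,t)+1)<t^k$, whereas you follow one high-chromatic component, which suffices for this existential statement. Your bookkeeping switches between the inductive bounds $(t-1)^{k-1}$ and $t^k$, but either one closes on its own; the extra color for $v_i$ is only needed in the first round, since $v_i\notin C_i$ for $i\ge 2$.
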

The above statement is constructive, and it is folklore that it can be turned into a polynomial-time algorithm that produces a witness.
As we are not aware of any explicit runtime analysis in the literature, we provide a proof of the following statement in the appendix.

\begin{restatable}[folklore]{theorem}{GyarfasAlgo}\label{thm:Gyarfas}
There exists an \(O(n^3)\)-time algorithm that, given a graph $G$ and integers $k,t \in \N$, returns an induced \(K_k\), an induced \(P_t\), or a proper \(t^k\)-coloring of \(G\).
\end{restatable}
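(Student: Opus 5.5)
We make Gyárfás' argument algorithmic by a recursive procedure $\mathrm{Col}(G',k',t)$. It takes an induced subgraph $G'$ together with an induced path $P=(p_1,\dots,p_j)$ in $G$, where $p_j$ is complete to $V(G')$, the vertices of $G'$ are anticomplete to $p_1,\dots,p_{j-1}$, and the current clique bound $k'$ satisfies a suitable invariant. It returns an induced $K_{k'}$ in $G'$, an induced $P_t$ in $G$, or a proper $t^{k'}$-colouring of $G'$.

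The classical proof in terms of the connected components of $G'$ runs as follows.
\begin{itemize}
\item Colour each component $C$ separately, reusing the same palette.
\item For a component $C$, pick any $v\in C$. Compute BFS layers $N_0=\{v\},N_1,\dots,N_r$ inside $C$.
\item Each layer $N_i$ with $i\ge 1$ induces a graph in which every vertex has a neighbour in $N_{i-1}$. Recurse into $N(v)\cap C$, whose clique bound is $k'-1$, since any clique there extends by $v$.
\end{itemize}

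The more convenient Gyárfás formulation is a path-growing procedure. Maintain a connected set $C$ and a vertex $v$ adjacent to all of $C$, where $v$ is the endpoint of an induced path of length $j$. Let $X=N(v')\cap C$ for a chosen $v'\in C$ and let $Y=C\setminus N[v']$. The set $X$ has clique number below that of $C$, so we colour it recursively with $t^{k'-1}$ colours. The components of $Y$ that are adjacent to $N(v')$ are extended along the path: for each such component $D$, pick $u\in N(v')\cap C$ with a neighbour in $D$, extend the path by $v',u$ (after checking inducedness), and continue. This uses fresh colours per depth level, giving at most $t$ levels times $t^{k'-1}$ colours, that is, $t^{k'}$ colours in total.

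The concrete plan is therefore the standard statement, ``every connected $G$ with a vertex $v$ contains an induced $K_k$, an induced $P_t$ starting at $v$, or is $t^k$-colourable'', proved by induction on $k$ and $t$. We use colour classes indexed by the pair (distance from the root, recursive colour), so that the distance index ranges over $t$ values and the inner colour ranges over $t^{k-1}$ values. A BFS layer at distance $\ge t$ yields an induced path of length $t$ via shortest paths, and shortest paths are induced, so this witness is produced directly. Each layer $N_i$ is then coloured recursively with clique bound $k-1$. To justify this, contract $N_0\cup\dots\cup N_{i-2}$ into a single vertex adjacent to all of $N_{i-1}$, which is the Gyárfás trick. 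Every vertex of a component of $N_i$ then has a common ``apex'' in the sense that the component together with its attachment is handled through an induced path back to the root. A clique in $N_i$ plus one common neighbour gives a larger clique only if we restrict to components of $N(x)\cap N_i$ for a single $x\in N_{i-1}$. That restriction is the step I expect to be delicate, and I would follow Gyárfás exactly: grow the induced path one vertex at a time, each time restricting to a component of the non-neighbourhood of the path interior that is complete to the last vertex's neighbourhood.

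For the running time, each vertex is charged once per path-extension level. Each level performs BFS and component computations in $O(n^2)$ time, and the nesting over at most $n$ choices yields $O(n^3)$ overall. The key point to verify is that the recursive subproblems at a fixed depth partition the vertex set, so the total work per level is $O(n^2)$. The number of levels is bounded by $k+t$, which we treat as at most $n$, since otherwise a trivial answer can be given (for example, if $k>n$ there is no $K_k$). This accounting of disjointness across the recursion is the main obstacle.
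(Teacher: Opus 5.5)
There is a genuine gap. The step you call ``delicate'' is the whole combinatorial content of the proof, and each of your three formulations fails at a concrete point.

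(i) You twice require the last path vertex to be \emph{complete} to the current piece ($p_j$ complete to $V(G')$; $v$ adjacent to all of $C$). Then $v,v',u$ is always a triangle, so the extension ``by $v',u$ (after checking inducedness)'' never succeeds. Even if $v$ were merely attached to $C$, an arbitrary $v'\in C$ need not be adjacent to $v$, and $u\in N(v')$ may be adjacent to $v$.

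(ii) In the BFS version, the clique bound drops only in the neighbourhood of the root. For $i\ge 2$, a $K_{k-1}$ returned from $N_i$ need not extend to a $K_k$ of $G$. Contracting $N_0\cup\dots\cup N_{i-2}$ gives a graph that is not an induced subgraph of $G$, so cliques through the contracted vertex certify nothing. The sets $N(x)\cap N_i$ for different $x$ overlap and are joined by edges, so their colourings cannot be combined with a bounded palette.

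The recursion that works, and that the paper uses, never looks at layers. At the root $v$ of a connected graph, colour $G[N(v)]$ recursively with clique bound $k-1$ (a returned $K_{k-1}$ extends by $v$), and give $v$ its own colour. For each component $H$ of $G-N[v]$, pick $w\in N(v)$ with a neighbour in $H$ and recurse on $G[V(H)\cup\{w\}]$ with $w$ as the new root; all $H$ share one palette. Since $H\cap N(v)=\emptyset$, a rooted path from $w$ extends by $v$ automatically. The path grows by exactly one vertex, the attachment vertex, and no inducedness check is needed.

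Your induction statement is also false as written. The claim ``an induced $K_k$, an induced $P_t$ \emph{starting at $v$}, or a $t^k$-colouring'' fails for a universal vertex $v$ added to a triangle-free graph of chromatic number above $3^4$, with $k=4$, $t=3$. The recursive call on $N(v)$ can only promise an induced $P_t$ somewhere, while the rooted recursion needs a shorter path from the root, so two path parameters are required. This is why the paper's $\GyarfasSub(G,k,t,v,l)$ has four outcomes and colour bound $g(k,t,l)=(l-1)(f(k-1,t)+1)$. That amounts to $t-1$ levels, each costing $f(k-1,t)+1$ colours. Your count of $t$ levels times $t^{k-1}$ omits the root's colour, and would overshoot $t^k$ once you add it.

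For the running time, the depth is not bounded by $k+t$. Each step sends $(k,l)$ to $(k-1,t)$ or $(k,l-1)$, so branches of length about $k(t-1)$ occur. The subproblems at a fixed depth also do not partition $V(G)$, since every $H\cup\{w\}$ carries the extra vertex $w$. This part is repairable:
\begin{itemize}
\item graphs shrink strictly, so the depth is at most $n$;
\item the subproblems at a fixed depth are pairwise disjoint apart from one attachment vertex each, so their total size is at most $2n$ and the work per level is $O(n^2)$.
\end{itemize}
Carried out, this would be a legitimate level-by-level alternative to the paper's induction $T(n)\le cn^3$ via superadditivity of $(x+1)^3$. As it stands, you only assert it.
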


\begin{corollary}\label{cor:clique-matching}
    There is an algorithm that, given a graph $G$, finds in time $O(n^{3}\log(n))$ a clique of size at least $\log_{2m+2}(k)-2$,
    where $m$ is the matching index of $G$ and $k$ is the clique number of $G$.
\end{corollary}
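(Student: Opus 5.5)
The plan is to run the algorithm of \Cref{thm:Gyarfas} with path length $t := 2m+2$ and to argue that the induced-path outcome can never occur, so that every call returns either a clique or a coloring. A binary search over the clique size then turns these outcomes into the claimed bound.

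\textbf{Step 1: an induced $P_{2m+2}$ gives a semi-induced matching of order $m+1$.} Let $v_1,\dots,v_{2s}$ be an induced path. Group the vertices into consecutive pairs $(v_{2i-1},v_{2i})$. In each pair, put one vertex into $A$ and the other into $B$, alternating the orientation so that consecutive path edges $v_{2i}v_{2i+1}$ always join two $A$-vertices or two $B$-vertices. Concretely, take $a_1=v_1, b_1=v_2, b_2=v_3, a_2=v_4, a_3=v_5, b_3=v_6, b_4=v_7, a_4=v_8,\dots$. Since the path is induced, the only edges between $A$ and $B$ are the pair edges $a_ib_i$. Edges inside $A$ or inside $B$ are irrelevant for semi-inducedness. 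Hence an induced $P_{2s}$ yields a semi-induced matching of order $s$. With $s=m+1$ this contradicts the definition of $m$, so $G$ has no induced $P_{2m+2}$.

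\textbf{Step 2: binary search on the clique size.} For a given $k'$, the call to \Cref{thm:Gyarfas} with parameters $(k',t)$ therefore returns either an induced $K_{k'}$ or a proper $t^{k'}$-coloring. A coloring certifies $k=\omega(G)\le\chi(G)\le t^{k'}$, that is, $k'\ge\log_t k$. The case $k'=1$ always yields a clique, and $k'=n+1$ always yields a coloring. I binary search on $k'\in[1,n+1]$, maintaining the invariant that a clique was found at $lo$ and a coloring at $hi$. Note that I never need the outcome to be monotone in $k'$; the invariant alone suffices. The search ends with $hi=lo+1$, and we output the $K_{lo}$ found there. Its size is $lo = hi-1\ge \log_{2m+2}(k)-1 \geq \log_{2m+2}(k)-2$. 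This takes $O(\log n)$ calls of cost $O(n^3)$ each, for a total of $O(n^3\log n)$.

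\textbf{Main obstacle: $m$ is not given.} The matching index $m$ is not part of the input, and it is presumably hard to compute. I would handle this by noting that the argument only needs some $t\le 2m+2$ for which no call returns an induced $P_t$. If a call does return an induced $P_t$, Step~1 certifies that $m\ge\lfloor t/2\rfloor$, so we can safely increase $t$ and restart. To keep the running time near $O(n^3\log n)$, I would first try doubling $t$ and then binary searching for the smallest $t$ at which no path is returned. Moving up to $t'\le 2(2m+2)$ costs only an additive constant in $\log_{t'} k$ at worst, and the slack of $-2$ in the statement is meant to absorb this. If this bookkeeping fails to fit the stated time bound, the fallback is to assume that $m$ (or an upper bound on it) is known, exactly as $h$ enters the bound in \Cref{thm:approx-halfgraph}.
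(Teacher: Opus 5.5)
Your Steps 1 and 2 are essentially the paper's proof. The paper also notes that $G$ has no induced $P_{2m+2}$, runs \Cref{thm:Gyarfas} with $t=2m+2$, and binary searches over $k'$. Your explicit alternating construction of the matching and your invariant-based search (clique at $lo$, coloring at $hi$, no monotonicity needed) make this more precise, and even give $\log_{2m+2}(k)-1$. The paper likewise just plugs in $t=2m+2$ without saying how $m$ is obtained, so your fallback puts you on the same footing as the paper.

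The remedy in your ``main obstacle'' paragraph, however, rests on a false claim. Overshooting $t$ costs a multiplicative factor, not an additive constant. For $t'\ge 2m+2$,
\[
\log_{t'}k=\log_{2m+2}k\cdot\frac{\ln(2m+2)}{\ln t'} .
\]
With $t'$ up to $2(2m+2)$ you therefore lose up to $\log_{2m+2}k\cdot\frac{\ln 2}{\ln(2m+2)+\ln 2}$, which is unbounded in $k$. Concretely, for $m=2$ the runs at $t=2,4$ may return paths, so doubling can stop at $t=8>6$. That yields only $\log_8 k-1\approx 0.86\log_6 k-1$. The $-2$ in the statement only covers the rounding $\lfloor\log_{2m+2}(k-1)\rfloor\ge\log_{2m+2}(k)-2$ and cannot absorb this.

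A sound fix must keep $t\le 2m+2$. Start at $t=2$. Whenever a call returns an induced $P_t$ (so $m\ge t/2$), set $t\leftarrow t+2$ but keep the current interval $[lo,hi]$. This is valid because the clique at $lo$ does not depend on $t$, and a certificate $k\le t_0^{hi}$ persists for all $t\ge t_0$. This uses $O(\log n+m)$ calls.

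Alternatively, follow your doubling phase by an invariant-based binary search on $t$. Keep as the lower end a $t$ at which some call returned a path, and as the upper end a $t$ at which a full $k'$-search ran path-free. This ends with $t\le 2m+2$ after $O(\log m\cdot\log n)$ calls.

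These cost $O(n^3(\log n+m))$ and $O(n^3\log n\log m)$ respectively, i.e.\ $O(n^3\log n)$ for bounded $m$. Two further remarks. A mere upper bound $M\ge m$ only gives $\log_{2M+2}(k)-1$. The analogy with \Cref{thm:approx-halfgraph} is off, since that algorithm never needs $h$ as input.
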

\begin{proof}
    A \(P_{2m+2}\) has matching index \(m+1\), and thus \(G\) contains no induced \(P_{2m+2}\).
    Thus, \cref{thm:Gyarfas} computes for a given \(k'\) and \(t=2m+2\) either an induced \(K_{k'}\) or a proper \((2m+2)^{k'}\)-coloring of~\(G\).
    Since \(\omega(G) = k\), we have \(\chi(G) \geq k\).
    Hence, for \(k'\) such that \((2m+2)^{k'} < k\),
    \cref{thm:Gyarfas} must return a clique~\(K_{k'}\).
    The largest such \(k'\) is \(\lfloor \log_{2m+2}(k-1) \rfloor \geq \log_{2m+2}(k) - 2\).
    Binary search over \(k' \in \{1, \ldots, n\}\) yields the result in time \(O(n^3 \log(n))\).
\end{proof}
Running the above algorithm on the complement graph yields an approximation for \textsc{Independent Set} in classes of bounded co-matching index.

\begin{corollary}\label{cor:is-comatching}
    There is an algorithm that, given a graph $G$, finds in time $O(n^{3}\log(n))$ an independent set of size at least $\log_{2m+2}(k)-2$,
    where $m$ is the co-matching index of $G$ and $k$ is the size of the largest independent set in $G$.
\end{corollary}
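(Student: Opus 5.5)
The plan is to run the algorithm of \Cref{cor:clique-matching} on the complement $\overline{G}$ and to translate both the guarantee and the parameter back to $G$. Independent sets of $G$ are exactly the cliques of $\overline{G}$. So the clique number of $\overline{G}$ equals $k$, the size of a largest independent set of $G$, and any clique returned for $\overline{G}$ is an independent set of $G$.

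The one thing to verify is that the matching index of $\overline{G}$ equals the co-matching index $m$ of $G$. This is the duality noted in the introduction, but since the index here is a number rather than just a boundedness property, I would check it directly on sets.
\begin{itemize}
    \item Let $A=\{a_1,\dots,a_t\}$ and $B=\{b_1,\dots,b_t\}$ be disjoint sets semi-inducing a co-matching in $G$, so $a_ib_j\in E(G)\Leftrightarrow i\neq j$.
    \item In $\overline{G}$, the same pairs with $a\in A$, $b\in B$ satisfy $a_ib_j\in E(\overline{G})\Leftrightarrow i=j$. Hence $A,B$ semi-induce a matching of order $t$ in $\overline{G}$.
    \item Complementation is an involution, so the converse holds as well.
\end{itemize}
Semi-inducedness only concerns edges between $A$ and $B$, so edges inside $A$ or inside $B$ play no role. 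Since the indices are the maximum such orders, they coincide.

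For the running time, the complement is computed in $O(n^2)$ time, which is dominated by $O(n^3\log n)$. Applying \Cref{cor:clique-matching} to $\overline{G}$ then gives, within that time, a clique of $\overline{G}$ of size at least $\log_{2m+2}(k)-2$, and this is the required independent set of $G$.

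There is no substantial obstacle here. The only points needing care are that the index correspondence is exact, so the same $m$ appears in the bound, and that the complementation overhead is negligible.
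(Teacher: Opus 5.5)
Your proposal is correct and matches the paper's argument: run the clique algorithm of Corollary~\ref{cor:clique-matching} on the complement $\overline{G}$, using that cliques of $\overline{G}$ are exactly the independent sets of $G$ and that the matching index of $\overline{G}$ equals the co-matching index of $G$. You also check this index equality directly on disjoint sets and note the $O(n^2)$ cost of complementation, which spells out what the paper's one-line proof leaves implicit.
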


\bibliographystyle{plain}
\bibliography{references}

\appendix

\section{Proof of \Cref{cor:nd-trichotomy}}\label{sec:proof-nd-trichotomy}
\NdTrichotomy*
\begin{proof}
    Let $Q_{\mathrm{bi}}$ be the function from \Cref{thm:trichotomy} and $R$ the Ramsey number, and set
    \(Q(h) := R\bigl(Q_{\mathrm{bi}}(h)\bigr)\).
    Let $G$ be a graph with neighborhood diversity at least \(Q(h)\), and let $S\subseteq V(G)$ be a twin-free set of size \(Q(h)\)
    (take one representative from each twin class).
    By Ramsey's theorem, $S$ contains a clique or independent set $A$ of size $Q_{\mathrm{bi}}(h)$.

    Since $A$ is homogeneous, two vertices in $A$ are twins if and only if they have identical neighborhoods in $V(G)\setminus A$.
    As $A$ is twin-free, all vertices in $A$ have pairwise distinct neighborhoods in $V(G)\setminus A$.
    Let $B$ contain one vertex for each distinct neighborhood in $A$.
    The bipartite graph between $A$ and $B$ is twin-free on both sides, so \Cref{thm:trichotomy} yields
    an induced matching, co-matching, or half-graph of order $h$, which is semi-induced in $G$.
\end{proof}

\section{Proof of \Cref{thm:Gyarfas}}

\GyarfasAlgo*

\begin{proof}
Define
\[
    f(k,t) = \begin{cases}
        0 & \text{if } k = 1, \\
        (t-1)(f(k-1,t)+1) & \text{otherwise,}
    \end{cases}
    \qquad \text{and}\qquad 
    g(k,t,l) = (l-1)(f(k-1,t)+1).
\]
Note that \(f(k,t) = g(k,t,t)\).
Moreover, \(f(k,t) < t^k\): For \(k=1\), we have \(f(1,t) = 0 < t\).
For \(k \ge 2\), by induction,
\[
    f(k,t) = (t-1)(f(k-1,t)+1) < (t-1)(t^{k-1}+1) = t^k - t^{k-1} + t - 1 < t^k,
\]
where the last inequality uses \(t^{k-1} \ge t > t - 1\).

We describe two mutually recursive algorithms.
\begin{itemize}
    \item
        \(\Gyarfas(G,k,t)\) takes a non-empty graph \(G\), and integers \(k,t \in \mathbb{N}\),
        and returns either an induced \(P_t\) or \(K_k\), or a proper \(f(k,t)\)-coloring of \(G\).
    \item
        \(\GyarfasSub(G,k,t,v,l)\) takes a connected graph \(G\), integers \(k,t,l \in \mathbb{N}\) with \(l \le t\),
        and a start vertex \(v \in V(G)\),
        and returns either an induced \(P_t\) or \(K_k\), or a proper \(g(k,t,l)\)-coloring of \(G\), or an induced \(P_l\) with endpoint \(v\).
\end{itemize}

\paragraph{Description of \(\Gyarfas(G,k,t)\).}
    For each connected component \(G'\) of \(G\) do the following:

    \medskip\noindent
    Pick an arbitrary \(v \in V(G')\) and call \(\GyarfasSub(G',k,t,v,t)\).
    If one subroutine call yields an induced \(P_t\) or \(K_k\), return it.
    Otherwise every component has a proper coloring with \(g(k,t,t) = f(k,t)\) colors.
    Aggregate the colorings of the connected components into a proper \(f(k,t)\)-coloring of \(G\)
    and return it.

\paragraph{Description of \(\GyarfasSub(G,k,t,v,l)\).}
    If \(t=1\) or \(k=1\) or \(l=1\), return the single vertex \(v\) (as \(P_t\), \(K_1\), or \(P_l\) with endpoint $v$, respectively).
    If \(G\) consists of a single vertex \(v\), return the coloring of \(v\) with one color.
    Otherwise, because $G$ is connected, $G[N(v)]$ is a non-empty graph and we can call \(\Gyarfas(G[N(v)],k-1,t)\).
    If the call returns a \(P_t\), return it.
    If it returns a \(K_{k-1}\) in \(G[N(v)]\), extend it by \(v\) and return the resulting \(K_k\).
    Otherwise, we obtain a proper \(f(k-1,t)\)-coloring of \(G[N(v)]\).
    Then, for every connected component \(H\) of \(G[V(G) \setminus N[v]]\) do the following.

    \medskip\noindent
    Pick an arbitrary \(w \in N(v)\) that is adjacent to a vertex in \(H\), and call \(\GyarfasSub(G[V(H) \cup \{w\}],k,t,w,l-1)\).
    If one call yields a \(P_t\) or \(K_k\), return it.
    If one yields a \(P_{l-1}\) with endpoint~\(w\), extend it into a \(P_l\) with endpoint \(v\) and return it.
    This is valid as \(w\) is adjacent to \(v\) and the remainder of the \(P_{l-1}\) lives in a component of \(G[V(G) \setminus N[v]]\)
    and thus is non-adjacent to \(v\).
    Otherwise, all \(H\) have a proper \(g(k,t,l-1)\)-coloring.
    Return a proper coloring of \(G\) by using \(g(k,t,l-1)\) colors for \(V(G)\setminus N[v]\),
    \(f(k-1,t)\) colors for \(N(v)\), and one color for \(v\).
    As required, the total number of colors is bounded by
    \begin{align*}
        g(k,t,l-1) + f(k-1,t) + 1 &=  (l- 1 -1)(f(k-1,t)+1) + f(k-1,t) + 1 \\
        &= (l-1)(f(k-1,t) + 1) \\
        &= g(k,t,l).
    \end{align*}
\paragraph{Runtime.}

Let \(T(n)\) be the maximum runtime of \(\GyarfasSub\) over all inputs with \(n\)-vertex graphs.
We argue by induction on \(n\) that \(T(n) \le cn^3\) for some constant \(c\).
For \(n=1\) this is clear, so assume \(n \ge 2\).

Consider an \(n\)-vertex graph \(G\) and parameters \(k,t,v,l\).
The routine \(\GyarfasSub(G,k,t,v,l)\) performs \(O(n^2)\) work and additionally 
recurses into the connected components of \(G[V(G) \setminus N[v]]\),
and (via a call to the main \(\Gyarfas\)-routine) recurses into the connected components of \(G[N(v)]\).
For each connected component of \(G[V(G) \setminus N[v]]\), an additional vertex is added before recursing.

Let \(n^+_1,\dots,n^+_r\) be the sizes of the connected components of \(G[N(v)]\),
and let \(n^-_1,\dots,n^-_m\) be the sizes of the connected components of \(G[V(G) \setminus N[v]]\).
Hence, by choosing \(c\) large enough, we can upper bound the runtime \(T\) of \(\GyarfasSub(G,k,t,v,l)\) by
\[
   T \le cn^2 + \sum_{i=1}^{r} T(n^+_i) + \sum_{i=1}^{m} T(1 + n^-_i).
\]

Let \(a=n^+_1\) be the first number and \(B = \{n^+_2,\dots,n^+_r,n^-_1,\dots,n^-_m\}\) be the set of remaining numbers.
We can upper bound the above term by 
\[
   T \le cn^2  + T(a) + \sum_{b' \in B} T(1+b').
\]
As \(v\) is in none of the components, all components are strictly smaller than \(n\).
By induction, we can substitute
\[
   T \le cn^2 + ca^3 + \sum_{b' \in B} c(1+b')^3.
\]
The function \(d(x):=(x+1)^3\) is superadditive for integers \(x,y\ge 1\), as
\[
d(x+y)-d(x)-d(y) =(x+y+1)^3-(x+1)^3-(y+1)^3 =3x^2y+3xy^2+6xy-1 \ge 0.
\]
Superadditivity yields
\[
   \sum_{b' \in B} (1+b')^3 \le \Bigl(1+\sum_{b' \in B} b'\Bigr)^3.
\]
Let \(b = 1+\sum_{b' \in B}b'\).
As \(v\) appears in no connected component, \(a+b = n\).
Thus
\[
   T \le c(a+b)^2 + ca^3 + cb^3.
\]
For integers \(a,b \ge 1\), we have \(a + b \le 2ab \le 3ab\), hence \((a+b)^2 \le 3ab(a+b) = 3a^2b + 3ab^2\), and thus
\[
   T/c \leq a^3 + b^3 + (a+b)^2 \le a^3 + 3a^2b + 3ab^2 + b^3 = (a+b)^3 = n^3.
\]
Therefore, \(\GyarfasSub(G,k,t,v,l)\) runs in time \(T \le c n^3\).
As we made no assumptions on the input except \(|V(G)|=n\), we have \(T(n) \le c n^3\), as claimed.
\end{proof}

\end{document}